\theoremstyle{plain} 
\newtheorem{thm}{Theorem}[section]
\theoremstyle{definition}
\newtheorem{defn}{Definition}[section]
\theoremstyle{remark}
\begin{document}

\parskip=10pt

\flushbottom 

\title{The complexity of bit retrieval} 

\author{
Veit Elser\\
Department of Physics\\
Cornell University} 

\date{}

\maketitle

\begin{abstract}
Bit retrieval is the problem of reconstructing a binary sequence from its periodic autocorrelation, with applications in cryptography and x-ray crystallography. After defining the problem, with and without noise, we describe and compare various algorithms for solving it. A geometrical constraint satis\-faction algorithm, relaxed-reflect-reflect, is currently the best algorithm for noisy bit retrieval.
\end{abstract}

\section{Bit retrieval}

The \textit{bit retrieval} problem is like the problem of factoring integers, but with some modifications to the rules of arithmetic. These modifications are illustrated below, where the calculation of $13\times 19$ in base-2 is contrasted with the rules of bit retrieval. There are two changes: (i) exponents in the binary expansion are periodic (columns ``wrap around"), and (ii) the columns are summed without carrying. In both factoring and bit retrieval the problem is to reverse the process: find the binary sequences at the top, given their product at the bottom.

\begin{table}[h!]
\centering
\begin{tabular}{lllllllll}
& & & & &1&1&0&1\\
$\times$& & & &1&0&0&1&1\\
 \hline
& & & & &1&1&0&1\\
& & & &1&1&0&1& \\
&1&1&0&1& & & & \\
\hline
&1&1&1&1&0&1&1&1\\
 \end{tabular}
 \qquad\qquad
 \begin{tabular}{llllll}
&0&1&1&0&1\\
$\times$&1&0&0&1&1\\
 \hline
&0&1&1&0&1\\
&1&1&0&1&0\\
&1&0&1&1&0\\
\hline
&2&2&2&2&1\\
 \end{tabular}
 
\begin{quote}
The rules of ordinary base-2 multiplication (left) are modified (right) so that exponents have period 5 and columns are summed without carries.
\end{quote}

\end{table}

Periodic or ring-like arrangements of integers that are combined with the rules just described are elements of the polynomial ring $Z_N=\mathbb{Z}[x]/(x^N-1)$, where $N$ is the period of the exponents ($N=5$ in the example). 
Factoring elements in $Z_N$ is hard, even when we are told the coefficients of the factors are limited to 0 and 1. We will see shortly that in bit retrieval it makes more sense to instead limit the coefficients to $\pm 1$; we therefore define the set
\begin{equation}
S_N=\{s_0+s_1 x+\cdots +s_{N-1}x^{N-1}\in Z_N\colon s_k=\pm 1, \quad 0\le k\le N-1\}.
\end{equation}

``Retrieval" is a reference to \textit{phase retrieval}, an important special case where the two coefficient sequences  are reflections of each other (one ``ring" is the mirror of the other). This brings us to our first formulation of bit retrieval:
\begin{defn}
\textit{Bit retrieval} is the problem where, given $a(x)\in Z_N$ known to have the form $a(x)=s(x)s(1/x)$ for some $s(x)\in S_N$, we must find $s'(x)\in S_N$ such that $s'(x)s'(1/x)=a(x)$.
\end{defn}
The problem definition sidesteps the question of uniqueness. Clearly, if $s'(x)$ is a solution, then so are $\pm s'(x)x^r$ and $\pm s'(1/x)x^r$, for arbitrary $r$. We restricted the solutions of bit retrieval to be elements of $S_N$ so that they form orbits in this group of order $4N$. Another nice property of $\pm 1$ sequences that will be useful later (section \ref{sec:convex}) is that they have the same 2-norm.

The fastest known algorithm for bit retrieval, as defined above, was discovered by How\-grave-Graham and Szydlo \cite{HGS} and was based on earlier work by Gentry and Szydlo \cite{GS} that proposed an attack on the NTRU digital signature scheme. This algorithm has about the same complexity as factoring a number of $O(N\log{N})$ bits; in fact, the first and hardest step of the algorithm is precisely the factorization of a number of that size. However, a seemingly small change can make bit retrieval much harder than this. Before we describe this change, we further develop the relationship to phase retrieval.

Bit retrieval is a highly idealized model of phase retrieval in crystallography. In that setting, the polynomial coefficients $s_0,\ldots,s_{N-1}$ are samples within one period of a periodic function (a 1D crystal), and the coefficients of the product $a(x)=s(x)s(1/x)$ their (periodic) autocorrelation:
\begin{equation}\label{auto}
a_k=\sum_{l=0}^{N-1} s_l s_{l-k}=\sum_{l=0}^{N-1} s_l s_{l+k}=a_{-k}.
\end{equation}
The indices in \eqref{auto} are all taken mod $N$. In crystallography one would refer to $s$ as the \textit{contrast} because one acquires information about it through its action on radi\-ation to produce diffraction patterns. Since the contrast elements are all $\pm 1$, the central autocorrelation is trivial: $a_0=N$. There are $\lfloor N/2\rfloor$ nontrivial autocorrelations as a result of the reflection symmetry in \eqref{auto}.

To complete the connection to phase retrieval, we start with the identity
\begin{equation}\label{fouriermag}
\hat{a}_q=\frac{1}{\sqrt{N}}\sum_{k=0}^{N-1}e^{i2\pi k q/N}a_k=\sqrt{N}\,|\hat{s}_q|^2,
\end{equation}
where
\begin{equation}\label{FT}
\hat{x}_q=\frac{1}{\sqrt{N}}\sum_{k=0}^{N-1}e^{i2\pi k q/N}x_k
\end{equation}
defines the Fourier transform of a periodic sequence $x$. From \eqref{fouriermag} we see that the autocorrelations only give us the Fourier transform magnitudes $|\hat{s}|^2$, and without knowledge of the phases of $\hat{s}$ we are unable to invert the transform \eqref{FT} to recover the signs $s$. Phase retrieval refers to the strategy of discovering the unknown phases of $\hat{s}$ by demanding consistency with additional information we have about the contrast $s$. In the case of bit retrieval, this translates to the observation that only very special sets of phases, when combined with the known magnitudes, produce contrast values comprising only $\pm 1$.

To a mathematician, the Fourier magnitudes $|\hat{s}|^2$ are algebraic numbers that when examined in great enough detail will reveal the $\pm 1$ coefficients of $s$, even without knowledge of the phases. By contrast, when these same numbers are measured in a diffraction experiment, they are known only to a finite precision. Given that we are retrieving elements from a finite set, how much imprecision or noise can be tolerated?

The observation that the autocorrelation coefficients are always integers in the same congruence class mod 4 as $N$ (appendix \ref{sec:hadamard}) motivates the following set of symmetric polynomials as the smallest ``quanta" of autocorrelation noise:
\begin{equation}\label{noise}
E_N=\left\{e(x)\in Z_N\colon e_0=0;\; e_k=e_{-k}=\pm 2,\; 1\le k\le \lfloor N/2\rfloor\right\}.
\end{equation}
\begin{defn}
\textit{Noisy bit retrieval} is the problem where, given a noisy autocorrelation $n(x)$ known to have the form $n(x)=s(x)s(1/x)+e(x)$, where $s(x)\in S_N$ and $e(x)\in E_N$, we must find elements $s'(x)\in S_N$ and $e'(x)\in E_N$  such that $n(x)=s'(x)s'(1/x)+e'(x)$.
\end{defn}
In noisy bit retrieval we know that all of the $k\ne 0$ noisy autocorrelations $n_k$ are off by $\pm 2$, but we do not know whether the true autocorrelations are obtained by rounding up by 2 or down by 2. We will see in the next section how even this amount of noise in the data completely undermines the most efficient bit retrieval algorithms. The fastest known algorithms for noisy bit retrieval have complexity $2^{c N}$; algorithms and estimates of the constant $c$ are discussed in sections \ref{sec:convex} and \ref{sec:RRR}.

We would still like it to be true that the introduction of noise \eqref{noise} does not, with high probability, sacrifice solution uniqueness. This is supported by the following theorem. Here we assume uniform probability distributions, both on the set of sign sequences $S_N$ and the noise $E_N$ we apply to their autocorrelations; ``random" elements of these sets are elements sampled from the uniform distribution.

\begin{thm}
Let $s$ and $e$ be random elements respectively of $S_N$ and $E_N$, and $n(x)=s(x)s(1/x)+e(x)$ the corresponding noisy autocorrelation. Let $s'$ be the same as $s$ but with a single one of its $N$ signs reversed. The probability, that there exists an $e'\in E_N$ such that $s'(x)s'(1/x)+e'(x)=n(x)$ (the modified $s'$ is also compatible with $n$), is equal to $(3/4)^{(N-1)/2}$ for odd $N$ and $(1/2)(3/4)^{N/2-1}$ for even $N$.
\end{thm}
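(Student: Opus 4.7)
\begin{proof}[Proof sketch]
By the translational symmetry of the autocorrelation, we may assume without loss of generality that the flipped index is $j=0$, so $s'_0=-s_0$ and $s'_l=s_l$ for $l\ne 0$. Let $a_k=\sum_l s_l s_{l+k}$ and $a'_k=\sum_l s'_l s'_{l+k}$; the condition ``there exists $e'\in E_N$ with $s'(x)s'(1/x)+e'(x)=n(x)$'' is equivalent to requiring $e_k-e'_k=a'_k-a_k$ for each $1\le k\le\lfloor N/2\rfloor$ (by the symmetry $a_k=a_{-k}$ and $e_k=e_{-k}$, only these $k$ are independent).

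The plan is to compute $a'_k-a_k$ index-by-index. The only terms of the sum that change are those containing $s_0$, namely $l=0$ and $l=-k \pmod N$. When $2k\not\equiv 0\pmod N$ these are two distinct indices and a direct calculation gives
\[
a'_k-a_k=-2s_0(s_k+s_{-k})\in\{0,\pm 4\}.
\]
When $2k\equiv 0\pmod N$ (which only occurs for even $N$ at $k=N/2$), the two indices $l=0$ and $l=-k$ are distinct but both contribute the same value $s_0 s_{N/2}$, yielding $a'_{N/2}-a_{N/2}=-4s_0 s_{N/2}\in\{\pm 4\}$.

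Since each $e_k,e'_k\in\{\pm 2\}$, we have $e_k-e'_k\in\{-4,0,+4\}$, so the requirement $e_k-e'_k=a'_k-a_k$ can always be met when $a'_k=a_k$ (any $e'_k$ works), and forces $e_k$ to take a specific sign when $a'_k-a_k=\pm 4$ (which then determines $e'_k$ uniquely). Conditioning on $s$, the probability over the independent signs of $e$ that all constraints are satisfied is therefore $(1/2)^{d(s)}$, where $d(s)$ counts the indices $k\in\{1,\dots,\lfloor N/2\rfloor\}$ at which $a'_k\ne a_k$.

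It remains to take the expectation over $s$. For $k$ with $2k\not\equiv 0\pmod N$, the pair $(s_k,s_{-k})$ consists of two independent uniform $\pm1$ signs, so $a'_k\ne a_k$ (equivalently $s_k=s_{-k}$) with probability $1/2$; the contributing factor is $\tfrac12\cdot\tfrac12+\tfrac12\cdot 1=\tfrac34$. The pairs $\{k,-k\}$ for distinct $k$ in $\{1,\dots,\lfloor N/2\rfloor\}\setminus\{N/2\}$ are disjoint, so these factors are independent. For even $N$, the index $k=N/2$ always contributes a forced factor of $1/2$. Multiplying gives $(3/4)^{(N-1)/2}$ for odd $N$ and $(1/2)(3/4)^{N/2-1}$ for even $N$.
\end{proof}

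The only subtle point is the separate treatment of the self-paired index $k=N/2$ for even $N$; everything else reduces to the elementary observation that flipping one sign of $s$ shifts each nontrivial autocorrelation by $0$ or $\pm 4$, exactly matching the range of $e_k-e'_k$.
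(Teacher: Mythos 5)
Your proof is correct and follows essentially the same route as the paper's: reduce to flipping index $0$, observe that $a'_k-a_k=-2s_0(s_k+s_{-k})\in\{0,\pm4\}$ exactly matches the possible values of $e_k-e'_k$, get a factor $\tfrac12\cdot\tfrac12+\tfrac12\cdot1=\tfrac34$ per independent pair $\{s_k,s_{-k}\}$, and treat the self-paired index $k=N/2$ (factor $1/2$) separately for even $N$. Your version is slightly more explicit about the two-stage expectation and the disjointness of the index pairs, but the argument is the same.
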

\begin{proof}
First consider odd $N$. Without loss of generality we may assume the reversed sign is in position 0, so $s'_0=-s_0$, and $s'_k=s_k$ for $k\ne 0$. Upon reversal, the autocorrelations change as follows:
\begin{equation}
a'_k-a_k=(s'_0-s_0)(s_k+s_{-k})=\pm 2(s_k+s_{-k}),\quad 1\le k\le(N-1)/2.
\end{equation}
Each change arises from a pair of independent signs, and is therefore 0 with probability $1/2$ and $\pm 4$ with probability $1/2$. Since the noisy data $n$ has not changed, $e'_k-e_k= a_k-a'_k$. Whenever $a'_k-a_k=0$, an unchanged $e'_k=e_k$ is compatible with the modified $s'$. However, whenever $a'_k-a_k=\pm 4$, both $e_k$ and $e'_k$ are determined (their values are limited to $\pm 2$) and in particular, only one choice of $e_k$ allows for $s'$ to be compatible with $n$. The net probability that there exists a compatible $e'_k$ is therefore $(1/2)(1)+(1/2)(1/2)=3/4$ and the stated result follows from the independence of the $(N-1)/2$ outcomes for the different $k$. When $N$ is even, only the case $k=N/2$ is changed because the change in $a_{N/2}$ is $\pm 4$ with probability 1. The probability that there exists a compatible $e'_{N/2}$ therefore changes from $3/4$ to $1/2$.
\end{proof}

We conjecture that uniqueness in the sense of the above theorem extends beyond the simple case of a single reversed sign. There is always non-uniqueness stemming from the invariance of the autocorrelation with respect to the order $4N$ group generated by cyclic-shifts, reflection and sign reversal. But this is a small group and inconsequential if we view bit retrieval, in information-theoretic terms, as the decoding stage of a noisy communication channel. In the ``noisy autocorrelator channel" an input signal of $N$ bits, in the form of signs $s$, is encoded with noise as $s\to a+e=n$. The stronger result suggested by the theorem is that the information capacity of this channel (an asymptotic property for large $N$) is the same as the entropy of the uniform distribution on the inputs.

Turning the (probabilistically qualified and symmetry amended) uniqueness conjecture into a theorem presents difficult challenges.  There exist polynomials, for example ($N=13$)
\begin{eqnarray}
s(x)&=&1+x^2+x^3+x^4+x^5+x^6+x^7+x^{10}+x^{11}\\
s'(x)&=&1+x^2+x^3+x^4+x^7+x^9+x^{10}+x^{11}+x^{12},
\end{eqnarray}
that have the same autocorrelation and yet are not in the same orbit of the order $4N$ symmetry group. The equality of the autocorrelations is in this case explained by the fact that
\begin{equation}
s(x)=p(x)q(x)=(1+x^2+x^7)(1+x^3+x^4),
\end{equation}
and $s'(x)=p(x)q(1/x)$. To prove the theorem one needs to bound this form of non-uniqueness, which exists even without noise. Though extremely rare, the phenomenon of factorizable solutions (contrast) is also known to occur in crystal\-lography\cite{PS}. 

By the standards of crystallography, the noise defined by \eqref{noise} has an unrealistic dependence on $N$. Since the noise coefficients $e$ are $O(1)$, so will be the difference in the Fourier coefficients $\hat{e}$, between the true and noisy transforms, $\hat{a}$ and $\hat{n}$. By \eqref{fouriermag} this translates into $O(1/\sqrt{N})$ errors in the Fourier magnitudes $|\hat{s}|^2$. Crystallography experiments are noisier, being content with an $O(1)$  signal-to-noise ratio and therefore noise amplitudes for $e$ and $\hat{e}$ in the bit retrieval model growing as $O(\sqrt{N})$. This brings us to yet a third problem: 
\begin{defn}
\textit{Fixed-precision bit retrieval} is the problem where, given precision $\eta>0$ and a noisy autocorrelation transform $\hat{n}$ known to satisfy
\begin{equation}\label{finiteprecision}
\left||\hat{s}_q|^2-\hat{n}_q/\sqrt{N}\right|<\eta,\quad 0\le q\le \lfloor N/2\rfloor,
\end{equation}
for some Fourier transformed sequence of signs $s$, we must find such a sign sequence.
\end{defn}
This version of bit retrieval comes closest to the phase retrieval problem in crystallography. There the data naturally arrives via the Fourier transform and is always subject to noise. The order $N$ of the cyclic group in bit retrieval corresponds to the number of resolution elements, or voxels, in the representation of the contrast. That the symmetry group of the 3D problem is not the cyclic group of order $N$, but a direct product of three such groups having the same order, is probably largely irrelevant to the complexity of bit retrieval. Finally, although a strict two-valued contrast is a poor way to approximate a continuous contrast function (electron density), it is not a bad model for representing a dilute collection of equally scattering atoms at low resolution.

The fastest algorithms for solving fixed-precision bit retrieval, like noisy bit retrieval, have complexity $2^{c N}$, where the constant $c$ now depends on the noise parameter $\eta$. But unlike the noisy version, solutions in the fixed-precision version have extensive entropy, that is, grow in number exponentially with $N$. This can be argued non-rigorously as follows.

Take $N$ large and consider flipping a large random subset of $M$ signs, while keeping $M\ll N$. The Fourier transform changes as $\hat{s}_q\to \hat{s}_q + \Delta\hat{s}_q$ where, using a result of Freedman and Lane \cite{FL}, the $\Delta\hat{s}_q$ are independent complex-normal random variables with zero mean  and variance $O(M/N)$. The probability that the flips violate any of the corresponding inequalities in \eqref{finiteprecision}, in the limit of small variance, is an integral over the tail of a Gaussian distribution and depends on $\eta$ as $B_q\exp{(-b_q N\eta^2/M)}$ for some positive constants $B_q$ and $b_q$. The probability that no inequality is violated behaves as
\begin{equation}\label{noviolation}
\prod_{q=0}^{\lfloor N/2\rfloor} \left(1-B_q\exp{(-b_q N\eta^2/M)}\right).
\end{equation}
Now consider the limit $N\to\infty$ with $M/N$ held fixed and $M/N\ll \eta^2$. In this limit \eqref{noviolation} approaches 1 for any of the sets of flipped signs which, for fixed $M/N$, have extensive entropy. Solutions therefore have extensive entropy for any $\eta>0$. Crystallography with fixed $\eta$ can escape this source of non-uniqueness by keeping $N$ under a bound proportional to $1/\eta^2$.

Fixed-precision bit retrieval would reduce to noise-free bit retrieval if instead of fixing $\eta$ (as $N$ increases) we were allowed to take the limit $\eta\to 0$. In this limit, the Fourier transform of the noisy $\hat{n}$, after rounding the coefficients, is the autocorrelation $a$ of bit retrieval. We get a variant of noisy bit retrieval if instead we take limits such that $N\eta^2=O(1)$, i.e. keeping $\eta$ just small enough to preserve solution uniqueness. The fixed-precision version lends itself naturally to  geometrical constraint satisfaction algorithms, two of which we shall describe in detail. Unlike the algebraic algorithms developed for solving the noise-free problem, the geometric algorithms are easily adapted to solve any of the three problems. Not surprisingly, the complexity of the geometric algorithms is relatively insensitive to $\eta$.

\section{Symmetry and noise}\label{sec:noise}

To appreciate the effect of noise on bit retrieval complexity, we focus in this section on instances where it is known that the signs $s$ have a reflection symmetry. We are then free to target the rotated polynomial $s'(x)=x^r s(x)$ that has the property $s'(x)=s'(1/x)$. To avoid complications in the presentation that do not alter the main ideas, we restrict ourselves to prime $N$ in this section.

Symmetric bit retrieval is very easy. Dropping the prime on our reflection symmetric signs, we define $b(x)\in Z_N$ with coefficients $b_k=(1-s_k)/2\in \{0,1\}$. The coefficients $a'_k$ of the corresponding autocorrelation $a'(x)=b(x)b(1/x)=b(x)^2$ are related to the sign autocorrelations as follows:
\begin{equation}
a'_k=\frac{1}{4}\left(a_k+N-2\sum_l s_l\right).
\end{equation}
By \eqref{auto} the sum of the signs is one of the square roots of the sum of the $a_k$'s. Exercising symmetry to always select the non-negative root, the transformed $a'(x)\in Z_N$ is known. We now observe there are exactly as many bits of information in the symmetric $b(x)$ as there are parity bits in the $a'(x)$ coefficients. This suggests reducing all the coefficients mod 2, so we are working in the ring $(\mathbb{Z}/2)[x]/(x^N-1)$:
\begin{eqnarray}
a'(x)&=&\left(b_0 + \sum_{k=1}^{(N-1)/2}b_k(x^k+x^{-k})\right)^2\\
&=&b_0 + \sum_{k=1}^{(N-1)/2}b_k(x^{2k}+x^{-2k}).
\end{eqnarray}
Since $N$ is a prime greater than 2, there is a unique element $2^{-1}$ in the field of $N$ elements and an explicit formula for bit retrieval:
\begin{equation}
b_k=a'_{2^{-1}k}\pmod{2}.
\end{equation}

By reducing the transformed autocorrelation coefficients mod 2 we have made our bit retrieval algorithm maximally vulnerable to noise. Indeed, with the $\pm 2$ uncertainty in $a_k$ of noisy bit retrieval, the parities of the $a'_k$'s are completely uncertain and so it would seem, the bits $b_k$. However, we next consider a more elaborate polynomial-time algorithm whose noise tolerance is somewhat better. Since sign reversal $s\to -s$ is the only symmetry remaining in reflection symmetric bit retrieval, it is not surprising that this special case of the problem can be reduced to a shortest lattice vector problem, which shares this symmetry.

When the signs $s$ have reflection symmetry, from \eqref{FT} we see that $\hat{s}$ is purely real and we can write $\hat{s}_q= |\hat{s}_q| y_q$, where $y_q=\pm 1$. We then have the following equations relating the signs $x_k$ of the unknown sequence and the unknown signs $y_q$ of its Fourier transform:
\begin{equation}\label{symFT}
\sqrt{N}\,|\hat{s}_q| \,y_q=x_0+\sum_{k=1}^{(N-1)/2}2\cos{(2\pi k q/N)}\,x_k,\quad 0\le q\le (N-1)/2.
\end{equation}
Since $y_{-q}=y_q$, there are just as many independent equations and Fourier signs, $M=(N+1)/2$, as there are unknown signs in the reflection-symmetric sequence we are attempting to retrieve.

The observation that \eqref{symFT} should hold for arbitrary levels of precision, in numerical approximations of the cosine functions and the data $\sqrt{N}\,|\hat{s}_q|$, leads to a polynomial time bit retrieval algorithm for sequences known to have reflection symmetry. Unlike the $q=0$ equation, which only reveals the number of $+1$ signs (up to overall sign reversal), the other equations, individually, become nontrivial instances of the integer partitioning problem when their coefficients are multiplied by a large number $K=2^P$ and then rounded to the nearest integer. Unlike the usual integer partitioning problem, here we require only that the partition produces a sum consistent with the round-off errors. Nevertheless, it is easy to produce arbitrarily good approximate integer equations because the round-off has a fixed bound while arbitrarily large $P$-bit approximations of the Fourier coefficients can be computed in time that grows as a polynomial in $P$.

It is straightforward to adapt the method of Lagarias and Odlyzko \cite{LO}, for solving low density subset sum problems, to solve symmetric bit retrieval. Low density in our context corresponds to setting the number of bits $P$, in the approximation of the coefficients, sufficiently large in comparison to the number of unknown signs, $2M$. However, rather than use just one of the $q\ne 0$ equations, and the information in just one of the Fourier magnitudes, we construct a lattice $\Lambda$ from information provided by all $M$ equations. The generators of $\Lambda$ are the rows of the 
following $2M\times 2M$ matrix:
\begin{equation}\label{G}
G=\left[
\begin{array}{cc}
\lfloor K D\rceil & 0\\
\lfloor K C\rceil & I_{M\times M}
\end{array}
\right],
\end{equation}
where $\lfloor\, \cdots \rceil$ denotes rounding to the nearest integer and the $M\times M$ blocks $C$ and $D$ are defined by
\begin{equation}
C_{k q}=\left\{
\begin{array}{ll}
1, & k=0\\
2\cos{(2\pi k q/N)}, & 1\le k\le M-1,
\end{array}
\right.
\end{equation}
\begin{equation}
D=\mathrm{diag}\left(-\sqrt{N}|\hat{s}_0 |\;,\ldots,\;-\sqrt{N}|\hat{s}_{M-1} |\right).
\end{equation}
By construction, $\Lambda$ has two short vectors:
\begin{eqnarray}\label{short}
v_1&=&[w_0\cdots w_{M-1}\;\; 1\cdots 1]=[0\cdots 0\;\; 1\cdots 1]\cdot G\\
v_s&=&[z_0\cdots z_{M-1}\;\; s_0\cdots s_{M-1}]=[y_0\cdots y_{M-1}\;\; s_0\cdots s_{M-1}]\cdot G,\label{vs}
\end{eqnarray}
where $w_0,\ldots,w_{M-1}$ and $z_0,\ldots,z_{M-1}$ are sets of small integers produced by round-off. Vector $v_1$ is small because each of the columns of the matrix $C$ has zero sum while $v_s$ is small by equations \eqref{symFT}.

Each column of \eqref{vs} represents one instance of the integer partitioning problem: assigning $M+1$ signs to the same number of $P$-bit integers to produce a small sum. In the equivalent subset sum problem we must find a subset of $M+1$ $P$-bit integers to produce a given target sum, again with neglect of the low order round-off bits. In base-2 arithmetic, a solution is checked by verifying that nearly $P$ column sums (low order bits excepted) are all even, where these were equally likely to have been either parity in a randomly guessed subset. Reasoning probabilistically, we conclude that $P$ must be at least as large as $M$ if we expect to recover a unique subset, or choice of signs in the equivalent integer partitioning problem. Fewer bits should suffice when the same set of signs is required to solve all $M-1$ non-trivial ($q\ne 0$) integer partitioning problems represented by \eqref{vs}. In fact, the necessary number of bits would be bounded if the information provided by each partitioning problem is in some sense independent of the others.

The question of how to efficiently find a partition places different demands on the number of bits in our integer approximation of the symmetric bit retrieval problem. The Lagarias-Odlyzko algorithm, associated with a single one of our $M-1$ partitioning problems, and assuming the specific non-random integers in $G$ are well modeled by average-case behavior, requires $P=O(M^2)$. We have not attempted to extend the analysis of the algorithm to the generator matrix $G$, and instead have performed experiments with the symmetric Hadamard sequence instances, defined in  appendix \ref{sec:hadamard}, that we believe to be among the hardest. In each experiment we apply the \textit{Mathematica} implementation of the LLL lattice reduction algorithm \cite{LLL} to $G$ and record a success when among the reduced basis we find a vector $a v_1+b v_s$, where $b\ne 0$.

In Figure 1 we show the dependence of the bit length $P$ on successful retrieval of symmetric Hadamard sequences by LLL basis reduction of the generator matrix $G$ given in \eqref{G}. At each $N$ for which such a sequence exists we plot the smallest $P$ for which the retrieval was successful. We see that $P$ appears to grow linearly with $N$. This behavior is below the quadratic growth required by LLL for solving a single random subset sum problem of the same size, but well above the probabilistically argued bounded bit-length required for solution uniqueness. 

\begin{figure}[!t]
\begin{center}
\includegraphics[width=4.in]{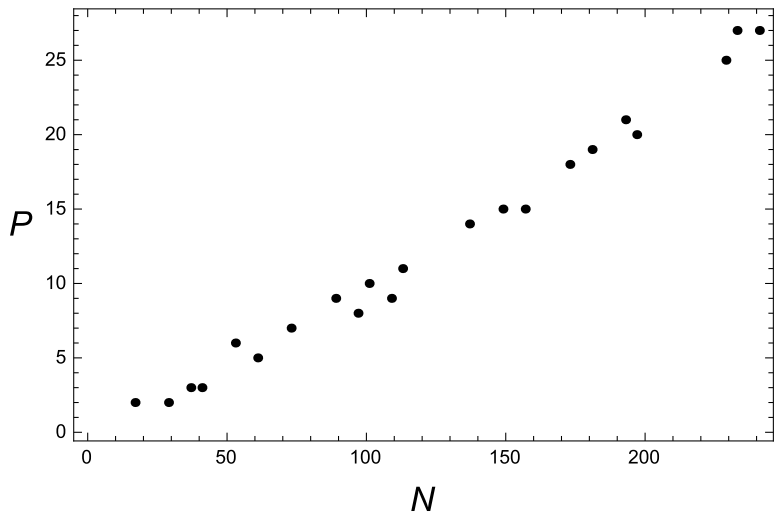}
\end{center}
\caption{Growth in the minimum bit length $P$ required for successful bit retrieval of symmetric Hadamard sequences of length $N$ by the LLL basis reduction algorithm.}
\end{figure}

Whereas any polynomial growth of the required bit length of the Fourier magnitude ``data" is consistent with a polynomial-time algorithm, that the data precision must grow at all 
eliminates the fixed-precision variant of bit retrieval. We will use the symmetric Hadamard instances to show that the lattice basis reduction algorithm also fails to solve noisy bit retrieval.

Symmetric Hadamard sequences (appendix \ref{sec:hadamard}) exist for all prime lengths $N\equiv 1\pmod{4}$ and are interesting because of their low autocorrelations, $a_k\in\{-3,1\}$, \mbox{$k\ne 0$}. For any such sequence there exists an $e\in E_N$ such that the noisy autocorrelation $n=a+e$ has $n_k=-1$, $k\ne 0$. The smallness of the $k\ne 0$ autocorrelations, relative to $a_0=N$, has by \eqref{fouriermag} the effect that the $q\ne 0$ Fourier magnitudes are nearly equal. Using the fact that the symmetric Hadamard sequences have the property $\hat{s}_q=s_q+1/\sqrt{N}$, their Fourier magnitudes take two values:
\begin{equation}\label{trueMag}
\hat{a}_q/\sqrt{N}=|\hat{s}_q|^2=\left|\pm 1+\frac{1}{\sqrt{N}}\right|^2,\qquad q\ne 0.
\end{equation}
When this noise-free data for a symmetric instance of bit retrieval is given as input to the LLL basis reduction algorithm above, we can retrieve the sequence in polynomial time. But suppose we are instead given the noisy data, where $n_k=-1$, $k\ne 0$. What should we expect of the LLL method when the Fourier magnitudes,
\begin{equation}\label{approxMag}
\hat{n}_q/\sqrt{N}=1-\frac{1}{N},\qquad q\ne 0
\end{equation}
obtained from $n$ are used instead of \eqref{trueMag}? Experiment shows that the algorithm in this case succeeds only up to $N=73$, and then only up to a maximum bit length $P_\mathrm{max}$. For $P>P_\mathrm{max}$ the difference between \eqref{trueMag} and \eqref{approxMag} has grown to be so large in the basis $G$ that $v_s$, defined by \eqref{vs}, is no longer a short vector (and found by LLL); in particular, $P_\mathrm{max}=5$ when $N=73$. With the addition of noise our polynomial-time algorithm did not simply suffer a decline in performance: it ceased being an algorithm altogether.

We expect a similar vulnerability to noise of the best known algorithm for non-symmetric instances of bit retrieval. This is the algorithm developed by Howgrave-Graham and Szydlo \cite{HGS}, recently generalized by Lenstra and Silverberg \cite{LS} to other groups. The first step of the HGS algorithm calls for the prime factorization of the product
\begin{equation}\label{norm}
\prod_{q=1}^{N-1}\left(\sqrt{N}\; \hat{a}_q\right)=\prod_{q=1}^{N-1}\left(N \; |\hat{s}_q|^2\right),
\end{equation}
a perfect square integer whose square root coincides with the norm of $s$ when the latter is viewed as an integer in the cyclotomic number field. This first step dominates the complexity, growing as $\exp{O(M^{1/3})}$ when using the number field sieve to factor $M$-bit norms, where $M=O(N\log{N})$. Still, this growth is far superior to any known algorithm that tolerates noise. However, were we to replace $\hat{a}$ in \eqref{norm} with its noisy counterpart $\hat{n}$, we have a problem because the resulting integer already fails at being a perfect square\footnote{While there is noise in cryptography too, this was not the case for the intended application of HGS, where by averaging sufficiently many digital signatures and rounding, one has access to the true norm.}.

\section{Hardness}\label{sec:hardness}

The product of the Fourier magnitudes of the sign sequence being retrieved has already made two appearances. First, in the lattice basis reduction algorithm for symmetric instances, we see from the generator matrix \eqref{G} that this product is the sequence-dependent factor in the lattice determinant. The second appearance was in the cyclotomic-integer norm \eqref{norm} that has to be factored in the sub-exponential-time HGS algorithm as a first step in solving general instances. In the presence of noise, when neither of these algorithms can be used, we will see that the exponential-time algorithms that take their place have a similarly strong dependence on the Fourier magnitude product. These observations motivate the following definition of a hardness index:
\begin{defn}\label{hardness}
The bit retrieval hardness index $h(s)$ of a sequence $s$ is the geometric mean of the Fourier magnitudes
\begin{equation}
\left\{|\hat{s}_q|^2\colon q\not\equiv -q\pmod{N}\right\},
\end{equation}
that is, those where the corresponding phases are not restricted to $\pm 1$. 
\end{defn}
The index is useful because it is easily computed from the bit retrieval input. It vanishes if any one of the magnitudes vanishes. This is appropriate, since the magnitude constraints for those $q$ can then be replaced by simple equality constraints. When $N$ is odd, the arithmetic-geometric mean inequality gives the following upper bound:
\begin{eqnarray}
h(s)&\le&\frac{1}{N-1}\sum_{q=1}^{N-1} |\hat{s}_q|^2\\
&=&\frac{N-|\hat{s}_0|^2}{N-1}\\
&\le&\frac{N-(1/N)}{N-1}=1+\frac{1}{N}.
\end{eqnarray}
The two means are equal if and only if all the $q\ne 0$ Fourier magnitudes are equal; this and $|\hat{s}_0|=1/\sqrt{N}$ may be taken as the defining properties of the perfect Hadamard sequences (appendix \ref{sec:hadamard}). The latter require $N\equiv 3 \pmod{4}$; when $N\equiv 1 $ the slightly imperfect, though symmetric Hadamard sequences that take their place have two-valued $q\ne 0$ Fourier magnitudes and the slightly smaller index $h(s)=1-1/N$. When $N$ is even the estimate must be modified since $|\hat{s}_0|$ can vanish; the resulting upper bound is then $1+2/(N-2)$.

The relationship of the hardness index to the complexity of constraint satisfaction algorithms is particularly direct. Consider the autocorrelation constraint set $A(a)\subset \mathbb{R}^N$ defined (for general $N$) by
\begin{equation}
A(a)=\left\{ x\in \mathbb{R}^N\colon |\hat{x}_q|^2=\hat{a}_q/\sqrt{N},\quad 0\le q\le \lfloor N/2\rfloor\right\}.
\end{equation}
Geometrically $A(a)$ is the Cartesian product of pairs of points, associated with $q=0$ and also $q=N/2$ when $N$ is even, as well as $\lfloor(N-1)/2\rfloor$ circles associated with the remaining $q$ for which $q\not\equiv -q\pmod{N}$.
A reasonable general strategy for bit retrieval is to devise a scheme that systematically samples $A(a)$ at some resolution, identifying promising candidates by the proximity of their coordinates to $\pm 1$. With uniform sampling, the number of sample points at fixed resolution will be proportional to the volume of $A(a)$:
\begin{equation}\label{vol}
\mathrm{vol}(A(a))\propto h(s)^{\frac{\lfloor(N-1)/2\rfloor}{4}}.
\end{equation}

The RRR constraint satisfaction algorithm (section \ref{sec:RRR}) generates samples by iterating a map $\mathbb{R}^N\to \mathbb{R}^N$ constructed from the projection to $A(a)$ and also the projection to the hypercube,
\begin{equation}
B=\left\{ x\in \mathbb{R}^N\colon x_k=\pm 1,\quad 0\le k\le N-1\right\}.
\end{equation}
Figure 2 shows the behavior of the RRR iteration count when solving random $N=101$ instances selected for five values of the hardness index. The $h\approx 1$ symmetric Hadamard sequence is too difficult for these experiments, but if we include an extrapolation from shorter symmetric Hadamard sequences that can be solved (Fig. 15), the iteration count has an upper range near $2\times 10^{11}$. The RRR algorithm is currently the best known for bit retrieval with noise. Since the volume \eqref{vol} for $N=101$ changes by $1.6\times 10^{9}$ as $h$ ranges between $0.3$ and $0.7$, the RRR algorithm must be doing something better than uniformly sampling the set $A(a)$ because the iteration count between those extremes changes by only $3.5\times 10^{4}$.

\begin{figure}[!t]
\begin{center}
\includegraphics[width=4.in]{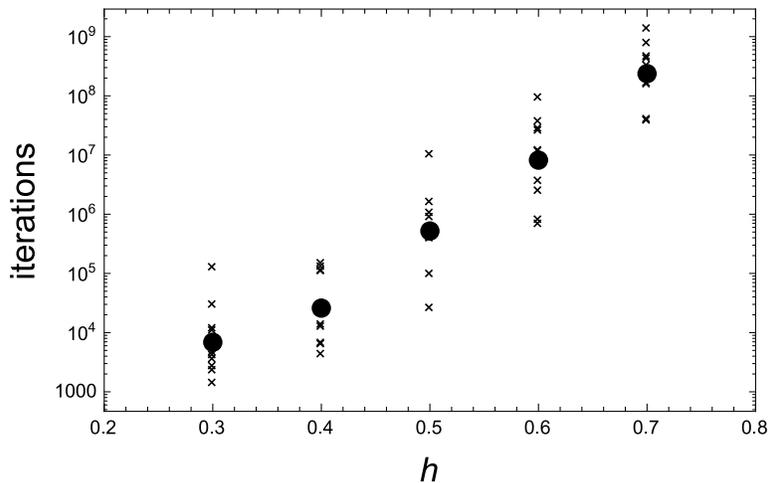}
\end{center}
\caption{Growth in the iteration count of the RRR constraint satisfaction algorithm (section \ref{sec:RRR}) with the hardness index $h$. Shown are data for $N=101$, 10 random instances at each of five values of the index. The circular markers give the geometric means of the data in each group.}
\end{figure}

It is for the hardest, or noisy versions of bit retrieval that the hardness index provides the greatest utility. The Fourier magnitudes are then approximated by the noisy autocorrelations: $|\hat{s}_q|^2=\hat{a}_q/\sqrt{N}\approx \hat{n}_q/\sqrt{N}$. Again, the hardness index will vanish if any of the magnitude approximations is consistent with zero when noise is taken into account.

We define average-case instances by the property that their hardness index is near the median for random sequences of the same length. To generate these, we start with many random sequences, sort them by their indices, and extract a small sample at the center of the sorted list. In the limit of large $N$ the random variable $h(s)$ has a narrow distribution (appendix \ref{sec:harddist}) so that average-case instances are characterized by the average hardness:
\begin{equation}
\langle h\rangle_\delta\sim 4\delta(1-\delta) e^{-\gamma}.
\end{equation}
Here $\gamma\approx 0.577$ is Euler's constant and $\langle\cdots\rangle_\delta$ denotes the expectation value in a slightly generalized distribution, where the signs are still independent but might have a bias:
\begin{equation}
\langle s_k \rangle_\delta=1-2\delta,\qquad 0\le k\le N-1.
\end{equation}
The parameter $\delta$ corresponds to the density of $-1$'s in the sequence. For un\-biased signs the average-case hardness is $e^{-\gamma}\approx 0.561$. Not surprisingly, small $\delta$ corresponds to an easy limit of bit retrieval.

\section{Sparsity}\label{sec:sparsity}

When the density $\delta$ of $-1$'s in a sequence $s$ is small, the Fourier magnitude $|\hat{s}_0|^2$ is large and all others small; the resulting hardness $h(s)$ will then be low. A more direct way to see that bit retrieval in this limit is easy is to consider the polynomials $b(x)\in Z_N$ introduced in section \ref{sec:noise} with coefficients $b_k=(1-s_k)/2\in \{0,1\}$. These polynomials are sparse in the usual sense of having only a few terms. The autocorrelation $b(x)b(1/x)$ is now the input for bit retrieval and may also be sparse. Bit retrieval in the low density limit is therefore the problem of reconstructing a sparse polynomial from its sparse autocorrelation.

Our analysis of sparse bit retrieval is limited to the noise free case and uses the framework of cyclic difference sets. The \textit{difference set} $D$ is the set of powers in $b(x)$ while the \textit{group set} $G$ is the multiset of nonzero elements in $D-D$ that appear in $b(x)b(1/x)$. For example, if $N=5$ and
 \begin{equation}
 b(x)=1+x+x^2,
 \end{equation}
 then
 \begin{eqnarray}
 D&=&\{0,1,2\}\\
 G&=&\{1,1,2,3,4,4\}.
 \end{eqnarray}
 When $D$ has $K$ elements the number of elements in $G$ is $K(K-1)$. Cyclic difference sets are combinatorial designs where $D$'s are constructed such that $G$ contains all the elements $1,\ldots,N-1$ with equal multiplicity. This property is far from satisfied in the sparse limit $K\ll N$. Bit retrieval in the language of cyclic difference sets is the problem of reconstructing $D$ given $G$.

We will analyze a simple back-tracking tree search algorithm for reconstructing a difference set $D$ from a group set $G$. The depth $k$ in the tree corresponds to the number of elements of $D$ that have been proposed. When depth $k=K$ is reached, and the differences $D-D$ coincide with $G$, the algorithm terminates.

Without loss of generality we may take the first element of $D$ to be 0. Subsequent elements are proposed from the perspective of the 0 element of $D$. Specifically, each of the remaining $K-1$ elements of $D$ must at least be elements of $G$. In addition, for each proposed new element of $D$  the differences with the nonzero elements of $D$ must also be checked for membership in $G$. When a proposal at depth $k$ is successful, the newly added $k\mathrm{th}$ element of $D$ is said to claim $k-1$ elements of $G$ and make them unavailable for subsequent proposals at greater depth in the tree. This can be done efficiently by maintaining a set $G'$ of unclaimed elements of $G$. The elements in $G'$ are proposed in turn until a successful one is found whereupon the depth is incremented. If none of the elements in $G'$ produces a successful proposal, the depth is decremented, the proposal at that level declared unviable, and the next element of $G'$ (for that level) is considered.

By ordering the elements in $G$, and preserving that order in the unclaimed elements $G'$, set membership can be checked efficiently. The ordering of $G$ also determines the order in which the elements of $D$ are proposed. 

Our analysis of the runtime is in the limit of large $N$, and where the average multiplicity of the elements of $G$ is held constant. In the language of bit retrieval, this corresponds to the limit where the average coefficient of $b(x)b(1/x)$ stays constant (excepting the coefficient of $x^0$). We define the \textit{mean multiplicity} as
\begin{equation}
\mu=\frac{K(K-1)}{N-1}\sim \frac{K^2}{N}=\delta^2 N.
\end{equation}
By keeping $\mu$ constant in the large $N$ limit, the density decreases as $\delta\sim\sqrt{\mu/N}$ and so does the hardness, since $h\sim (4 e^{-\gamma})\delta$.

There are two regimes, depending on the frequency of the zero multiplicity elements (differences completely missing from $G$). We model\footnote{This is a model insofar as only very special $G$ correspond to any $D$.} the elements of a random $G$ as Poisson samples from the set $\{1,\ldots,N-1\}$. In this model, the probability of multiplicity 0 is $e^{-\mu}$. When an element of $G$ is proposed as an element of $D$ at depth $k$ of the tree, the probability of success is less than
\begin{equation}
(1-e^{-\mu})^{k-1},
\end{equation}
because the mean multiplicity $\mu$ of the unclaimed elements $G'$ decreases with the depth $k$. The simplest regime is the case $\mu\ll1$, where the success probability in our random model decays as $\mu^{k}$. This means that proposals are almost never successful by luck, but only because the proposed element belongs to the true difference set $D$. In this regime of $\mu$ there is almost never any backtracking: the algorithm simply runs through the elements of $G$ in turn, collecting those elements that belong to $D$ and skipping over the rest (which are eventually removed as differences to non-zero elements). The runtime has a $O(K^2)$ contribution from the proposals aborted almost always after a single $G$-membership check, and another $O(K^2)$ contribution, cumulatively, from checks that proved successful. Altogether then, the complexity is $O(\mu N)$ when $\mu\ll1$.

When $\mu\gg 1$ the search tree acquires width and backtracking contributes significantly to the complexity. Again using the Poisson model, an estimate of the number of successful proposals at depth $k$ (search tree nodes) is
\begin{equation}\label{n}
n(k)=\binom{N-1}{k-1}(1-e^{-\mu})^{\binom{k}{2}}.
\end{equation}
Apart from the 0 element at depth $k=1$, the algorithm will try all combinations of the nonzero elements because the probability any of them has zero multiplicity, $e^{-\mu}$, is very small. We restrict ourselves to depths $k\ll K$ so that our estimate \eqref{n} is still accurate when we account for the fact that proposals at lower depths have claimed a fraction $(k/K)^2$ of the elements of $G$. Applying Sterling's approximation and $\mu\gg 1$ to \eqref{n}, we find (details in appendix \ref{sec:maxwidth}) that the search tree has maximum width $n(k^*)$ for
\begin{equation}
k^*\sim e^\mu \log{N}, \qquad N\to\infty.
\end{equation}
This is subdominant to $\sqrt{\mu N}\sim K$, consistent with our assumption $k\ll K$.

Our runtime estimate for the case $\mu\gg 1$ is the work performed up to reaching depth $k^*$, where the tree has the most nodes. The algorithm will make
\begin{equation}\label{backtrack}
\binom{N-1}{k^*-1}\sim N^{e^\mu\log{N}}, \qquad N\to\infty
\end{equation}
proposals before it finds one of the $n(k^*)$ successful ones, and one of these in particular that has a branch that extends all the way to $k=K$. Since solutions (difference set reconstructions) are unique up to symmetry when there is no noise, the algorithm will make this number of proposals, reduced by symmetry, before a correct $k^*$-element subset of $D$ is discovered. As there are $2K$ solutions that have $0\in D$, the symmetry reduction is $O(\delta N)$ and subdominant relative to \eqref{backtrack}. There are  $(k^*-1)(k^*-2)/2$ checks of $G$-membership that go with each proposal, but since this is $O((\log{N})^2)$, its contribution to the complexity, multiplicatively, is subdominant. Contributions from work performed at greater depths are also subdominant. 

Our analysis shows that the complexity of difference set reconstruction by back-tracking tree search crosses over from $O(N)$, in problems where the mean multiplicity $\mu$ is fixed at small values, to the variable-exponent form \eqref{backtrack}, when $\mu$ is fixed at large values. Though the latter case still corresponds to ever sparser problems as $N$ grows, the exponential behavior of the exponent with $\mu$ makes the back-tracking algorithm impractical for even modest $\mu$. This behavior is in sharp contrast with the RRR algorithm, whose behavior with respect to hardness was already reported in section \ref{sec:hardness}. Figure 3 shows the behavior of the RRR iteration count with $N$ when $\mu$ is fixed at $2,4,6$ and $8$. Although there is exponential growth in the iteration count with $\mu$, the exponent of $N$ in all cases is small, perhaps even consistent with zero. Since the complexity of a single RRR iteration is $O(N\log{N})$, the RRR algorithm apparently maintains a low exponent complexity with $N$ in a sparseness regime (large $\mu$) where the back-tracking algorithm does not.

\begin{figure}[!t]
\begin{center}
\includegraphics[width=4.in]{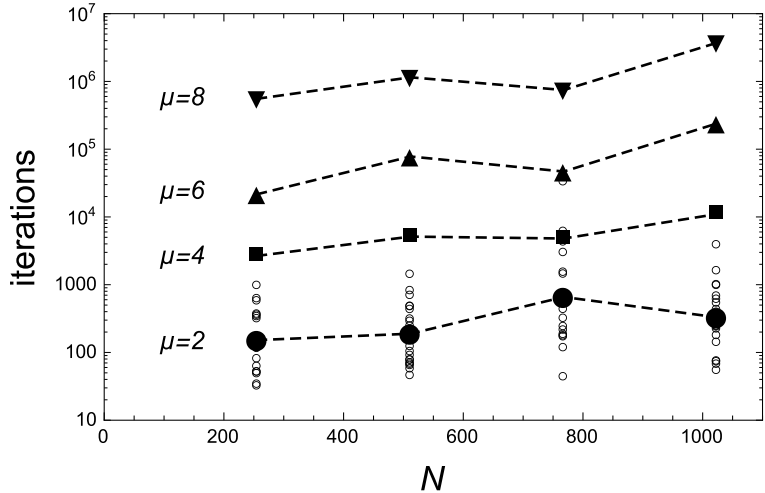}
\end{center}
\caption{Slow growth of the RRR iteration count with $N$ when the mean multiplicity $\mu$ is held fixed. The four plot symbols correspond to the indicated $\mu$ values. Each solid data point is the geometric mean of 20 random instances; empty circles show the scatter in the 20 instances for the series with $\mu=2$.}
\end{figure}

Phase retrieval in crystallography, to the extent that it can be modeled by bit retrieval, corresponds to the sparse limit. In a protein crystal the nitrogen, carbon and oxygen atoms scatter x-rays with similar strength and much more strongly than hydrogen. It is therefore not a bad approximation to model the contrast as equal 1 bits in a field of 0's. The number density of non-hydrogen atoms in a protein crystal is roughly $10^{-2}/\mathrm{\AA}^{3}$, and the best diffraction data can resolve the contrast to a scale of $1\,\mathrm{\AA}^3$ voxels. From these two numbers we infer $\delta\approx 10^{-2}$. However, as the analysis of the back-tracking algorithm has shown, bit retrieval can be hard even when $\delta$ is small. The more relevant parameter is the mean multiplicity, given by the product $\mu=\delta K$, where $K$ is the number of 1's (number of non-hydrogen atoms). It is interesting that the largest structures solved by strict phase retrieval, or ``direct methods" \cite{US}, have $K\approx 10^3$ and therefore $\mu\approx 10$.

\section{Convex relaxation}\label{sec:convex}

In the presence of noise, and without sparsity, we do not even have a sub-expo\-nential-time algorithm for bit retrieval. The challenge is then to find the algorithm that minimizes $c$ in the exponential complexity $2^{cN}$. While always bleak in practical terms, from a theoretical perspective an exponential complexity has the redeeming feature that we can be cavalier about the implementation, as those details usually contribute only a polynomially dependent factor.

As a first attempt to bound $c$ away from 1 we can ask: How small of a fraction of the signs, upon being given guessed values, allows for an easy determination of the extensibility of the guess into a complete solution? Here ``easy" refers to any polynomial-time computation. Perhaps the simplest algorithm of this kind is based on the observation, that if the first $k$ signs are guessed and only the remaining $N-k$ are treated as unknowns, then for suitable $k$ there will be at least $N-k$ of the quadratic autocorrelation equations \eqref{auto} that are linear in the unknowns. By simple counting we arrive at the sufficient condition
\begin{equation}\label{simple}
k\ge N-\lfloor N/2\rfloor/2.
\end{equation}
In the absence of noise, the $N-k$ linear equations for $N-k$ real variables can be solved by Gaussian elimination or, in the presence of noise, the linear inequalities are solved by linear programming. We are done if there is a solution/feasible point where all the variables are $\pm 1$ and also satisfy the remaining quadratic equations/inequalities; if not, we make another guess for the first $k$ signs and repeat. This analysis fails for the class of instances where every $k$-subsequence of the solution signs gives a singular matrix for the linear system. Assuming our instance is not in this class, then trying all $2^k$ guesses gives an algorithm with $c=3/4$ by \eqref{simple}. This is surely a very poor bound on $c$ since the method uses only half of the available autocorrelation data\footnote{Curiously, when $N$ is divisible by 3 the bound drops to $c=2/3$. Rather than guess the first $k$ signs, in this case we guess the signs at all $k\equiv \pm 1 \pmod{3}$ and solve $N/3$ equations/inequalities that are linear in the remainder.}.

We can get a smaller constant $c$ with an algorithm that uses all the autocorrelation data, as well as bounds on the variables. Starting from the constraint sets $A(a)$ and $B$ introduced in section \ref{sec:hardness}, the improved algorithm follows from the observation that all $x\in B$ have the same 2-norm and therefore the autocorrelation constraint set may be made convex:
\begin{equation}\label{Aconvex}
\overline{A}(a)=\left\{x\in \mathbb{R}^N \colon |\hat{x}_q|^2 \le \hat{a}_q/\sqrt{N},\quad 0\le q\le \lfloor N/2\rfloor \right\}.
\end{equation}
The statement $x\in \overline{A}(a)\cap B$ implies $x\in A(a)\cap B$ because all the inequalities in \eqref{Aconvex} have to be saturated in order that $x$ has the required 2-norm. Bit retrieval remains hard because $B$ is still non-convex.

The feasibility of a partial assignment of $\pm 1$ values to the variables in the constraint satisfaction problem can now be tested by solving a convex problem. Choosing to assign values in consecutive order, we define the following convex relaxations (facets) of the hypercube constraint set:
\begin{equation}\label{facet}
B(s_1,\ldots,s_K)=\left\{x\in \mathbb{R}^N \colon 
\begin{array}{ll}
x_k=s_k,& k\in\{1,\ldots,K\}\\
 |x_k|\le 1,&k\notin\{1,\ldots,K\}
 \end{array}
 \right\}.
\end{equation}
If we find that the convex set $\overline{A}(a)\cap B(s_1,\ldots,s_K)$ is empty, then we know that the solution does not have consecutive signs $s_1,\ldots,s_K$. The tree of sign assignments is searched exactly as in the branch and bound algorithm for integer programming. Whenever the set intersection is not empty, $K$ is incremented; otherwise, $x_K$ is assigned the other sign, or if that has already been tried, $K$ is decremented. 

The problem to be solved at each node of the branch and bound tree is finding a point in the intersection of two convex sets or producing a proof that such a point does not exist. This is solved by the ellipsoid method \cite{K} in polynomial time when we are provided with two things: (i) a lower bound on the volume of any feasible region, and (ii) a polynomial-time \textit{separation oracle}. The first condition simply excuses us from failing to find feasible points when the volume of the feasible region is too small. We should therefore only use the proposed branch and bound algorithm for fixed-precision bit retrieval, where the noise parameter $\eta$ gives us license to very slightly weaken the inequalities in \eqref{Aconvex} as well as thicken the hypercube facets \eqref{facet} so as to make them full-dimensional. These refinements give the volume of the feasible region a lower bound.

A separation oracle for convex set $C$, when given a point $x$, either declares \mbox{$x\in C$} or returns a hyperplane that separates $x$ from $C$. Such an oracle can be implemented in polynomial time when, as in our problem, $C=A\cap B$ is the intersection of two convex sets and \textit{projections} $P_A$ and $P_B$ to these sets can be computed in polynomial time. The projection of point $x$ to $A$ (and analogously for $B$) is a point $P_A(x)\in A$ that minimizes the 2-norm to $x$ up to a bound set by the precision\footnote{The computed projection is within a ball of radius set by $\eta$ of a true distance minimizing point in $A$. Note that in the finite precision context $A$ is a finite set and one makes no distinction between the closed and open topology.}.

The implementation of the separation oracle with projections encounters two cases. After computing $p_A=P_A(x)$ and $p_B=P_B(x)$, either $x=p_A=p_B$ and we know $x\in C$, or one of the projections, say $p_A$, is distinct from $x$. The required hyperplane, in the second case, is the co-dimension-1 hyperplane that passes through $p_A$ and is orthogonal to the line passing through $x$ and $p_A$. That this is a valid separating hyperplane is explained in the caption to Figure 4.

\begin{figure}[!t]
\begin{center}
\includegraphics[width=3.in]{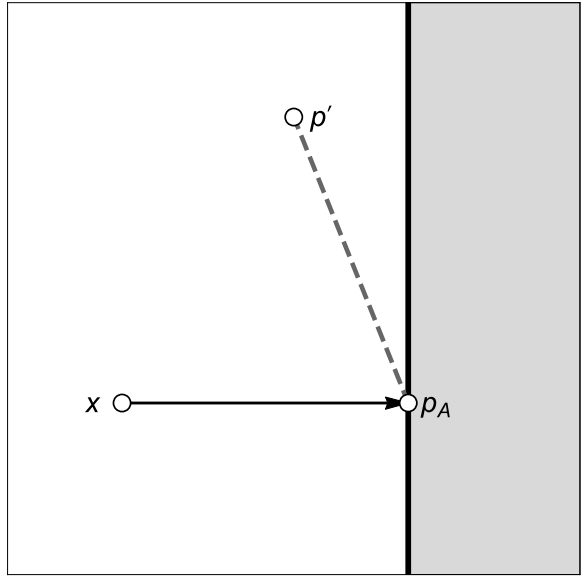}
\end{center}
\caption{Separating hyperplane construction by projections. The point $x$ has been projected to $p_A=P_A(x)$, the nearest point on convex set $A$. The proposed hyperplane is shown passing through $p_A$ and orthogonal to the line between $p_A$ and $x$, with the feasible region that contains $C=A\cap B$ shown shaded. If there existed a point $p'\in C$ not in the region defined by the hyperplane, then because $C$ is convex, all the convex combinations of $p_A$ and $p'$ (dashed line) would also belong to $C$. But that is impossible because then there would be a point in $C\subset A$ closer to $x$ than $p_A$ (in the convex combination and close to $p_A$).}
\end{figure}

The projections to $\overline{A}(a)$ and $B(s_1,\ldots,s_K)$ can be computed in  $O(N\log{N})$ and $O(N)$ time, respectively. To project to $\overline{A}(a)$ we note that since the Fourier transform preserves the 2-norm, we may work with the inequalities in \eqref{Aconvex} directly. The 2-norm minimizing projection map either leaves $\hat{x}_q$ unchanged, if the magnitude inequality is satisfied, or replaces the magnitude of $\hat{x}_q$ by the magnitude that saturates the bound (leaving the phase unchanged). The complexity of the computation is dominated by the fast Fourier transform, first from $x$ to $\hat{x}$, and then back to $x$. Projecting to the facet \eqref{facet} involves either replacing $x_k$ by $s_k$ when $k\in\{1,\ldots,K\}$, or when $k\notin\{1,\ldots,K\}$, leaving $x_k$ unchanged when its magnitude does not exceed 1, or replacing it by $\pm 1$, whichever has the same sign. Both projections are easily modified, with almost no additional effort, when the constraint sets are replaced by their noisy counterparts.

Because we have a lower bound on the volume of the feasible set and a separation oracle that can be implemented in polynomial time, the convex feasibility problems we need to solve at the nodes of our branch and bound tree can be solved in polynomial time. The complexity is therefore dominated by the exponential growth in the number of nodes. Finding good bounds on the number of nodes at depth $K$, $n(K)$, is a difficult problem for a general instance with autocorrelation $a$. However, using sampling techniques we have been able to learn much about $n(K)$ and estimate the constant $c$ of the branch and bound algorithm.

We can express the number of nodes at depth $K$ as
\begin{equation}
n(K)=2^K p(K|a),
\end{equation}
where $p(K|a)$ is the probability, conditional on the autocorrelation $a$ of the constraint \eqref{Aconvex}, that a randomly selected facet \eqref{facet} produces a feasible point in the convex constraint problem. By sampling many random facets of the hypercube (length-$K$ sign sequences) this probability can be determined to sufficient precision that we get a good estimate of $n(K)$.
Anticipating exponential growth, we define the scaled logarithmic tree-width:
\begin{equation}
w=\frac{1}{N}\log_2{n(K)}=K/N+\frac{1}{N}\log_2{p(K|a)}.
\end{equation}
For small $K$ almost all sign sequences produce a feasible point, $n(K)\sim 2^K$, and $w\sim K/N$. When $K$ is near the upper end of its range, $w$ will return to zero in the case of low noise, since solutions and $n(N)$ have vanishing entropy. This behavior is displayed in Figure 5 for five $N=48$ average-case instances we sampled. The log-width $w$ is plotted against the fractional depth $y=K/N$ because we anticipate that $w(y)$ will have a similar shape for different sizes $N$, at least when comparing instances of the same hardness.

\begin{figure}[!t]
\begin{center}
\includegraphics[width=4.in]{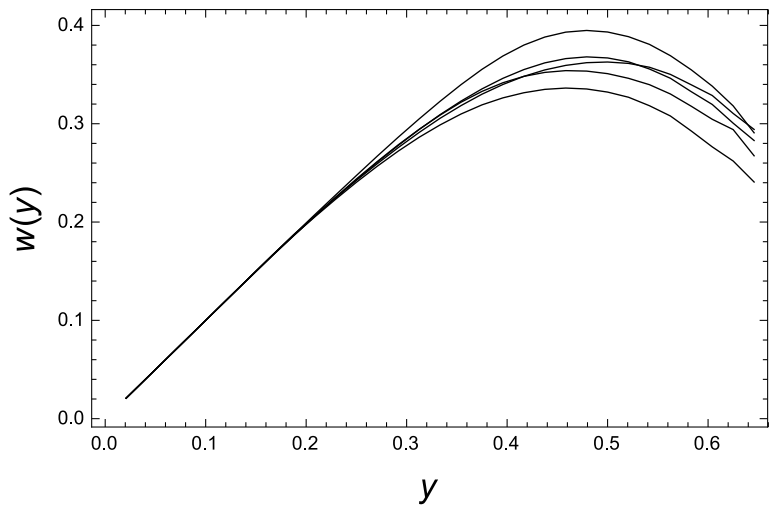}
\end{center}
\caption{Scaled log-tree-width $w(y)$ versus fractional tree depth $y$, of the branch and bound bit retrieval algorithm for five average-case $N=48$ instances.}
\end{figure}

To determine the exponential complexity we note that, due to the symmetry of solutions, on average a $O(1/N)$ fraction of the nodes at the fractional depth $y^*=K^*/N$ of maximum log-width $w(y^*)$ have to be explored before a branch is found that extends to $K=N$. Including the work performed at other depths only contributes a polynomial factor (the total depth is $N$), as does the polynomial-time work performed at each node. Since $n(K^*)=2^{w(y^*)N}$, we estimate the complexity constant as
\begin{equation}\label{wmax}
c=\max_y w(y).
\end{equation}
Although there is considerable variation from one instance to another, we see from Figure 5, that at least for average-case $N=48$, the tree has greatest width near $y=0.5$ and $c\approx 0.36$.

We obtain much better evidence of $w(y)$ converging to a large-$N$ limit (as in statistical mechanics) when we consider special families of instances. We chose the perfect Hadamard sequences as these maximize the hardness index. The constant \eqref{wmax} derived from this family should be a strong candidate for the worst-case complexity of the branch and bound algorithm. Figure 6 shows the log-widths for perfect Hadamard instances of size $N=20, 40, 80$ and $160$. The noise free, $k\ne 0$ autocorrelations for all these instances is $a_k=-1$. Though solutions in the absence of noise exist only for particular odd $N$, the small noise required to have solutions at the chosen $N$ should have negligible effect on the complexity of this algorithm, which is determined by the widest part of the tree. Since we are mostly interested in the complexity bound, our samples were confined to depths where the search trees have their greatest width.

The maxima of the four $w(y)$ curves for prefect Hadamard instances have a systematic behavior\footnote{We have no reason to suspect that those values of $N$ for which perfect Hadamard sequences actually exist would deviate from the observed behavior.} with $N$ and enable us to extrapolate, in Figure 7, both the fractional depth of the maximum and the value at the maximum to their $N=\infty$ values. We find $y_\infty\approx 0.630$ and $w(y_\infty)\approx 0.564$. Our bound for the exponential complexity constant is therefore $c<0.564$. Although much worse than average-case, this is still significantly better than the bound $c<3/4$ obtained at the start of this section, where the tree is searched exhaustively to fractional depth $y=3/4$ and the remaining variables are found by solving linear equations.

\begin{figure}[!t]
\begin{center}
\includegraphics[width=4.in]{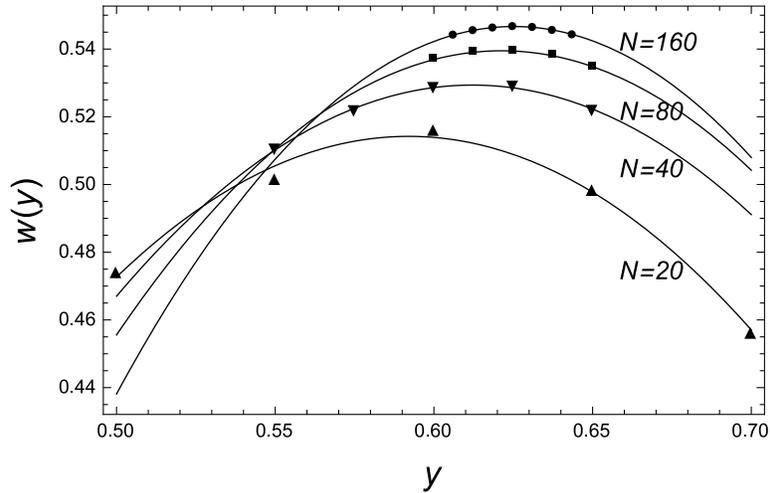}
\end{center}
\caption{Behavior of the scaled log-tree-width $w(y)$ of the branch and bound bit retrieval algorithm as the sizes of Hadamard instances are doubled. Both the location and value of the maximum appear to be converging. Extrapolations of the location $y^*$ and value $w(y^*)$ from these data are shown in Figure 7.}
\end{figure}

\begin{figure}[!t]
\begin{center}
\includegraphics[width=4.5in]{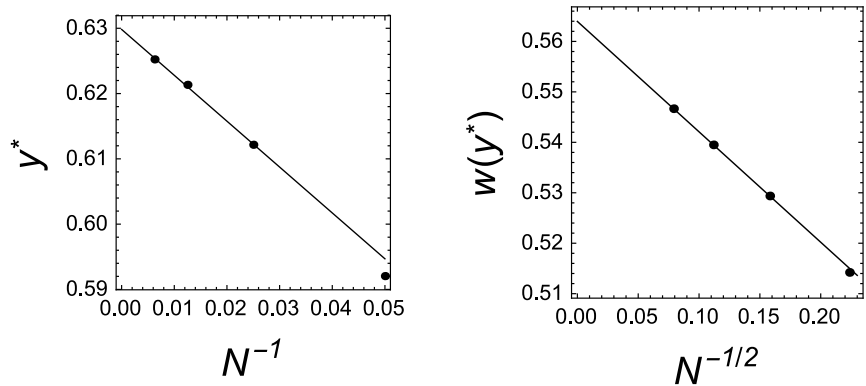}
\end{center}
\caption{Extrapolations of the data in Figure 6, assuming the finite-$N$ corrections to $y^*$ and $w(y^*)$ are respectively proportional to $N^{-1}$ and $N^{-1/2}$.}
\end{figure}

\section{Relaxed-reflect-reflect algorithm}\label{sec:RRR}

The RRR algorithm has an element of randomness making the run-time unpredictable. Figure 8 plots the distribution of run-times (iterations) in $2\times 10^5$ solutions of the $N=43$ Hadamard instance. The most probable runtime is near zero and the distribution decays exponentially, for a mean run-time of $1.7\times 10^5$ iterations. These statistics are consistent with an algorithm that blindly and repeatedly reaches into an urn of $M$ solution candidates, terminating when it has retrieved one of the $4\times 43$ solutions. Two questions immediately come to mind. The easier of these is: How can an algorithm that is deterministic over most of its run-time behave randomly? The much harder question is: How did the $M=2^{43}$ solution candidates get reduced, apparently, to only about $1.7\times 10^5\times (4\times 43)\approx 2^{25}$?

\begin{figure}[!t]
\begin{center}
\includegraphics[width=4.5in]{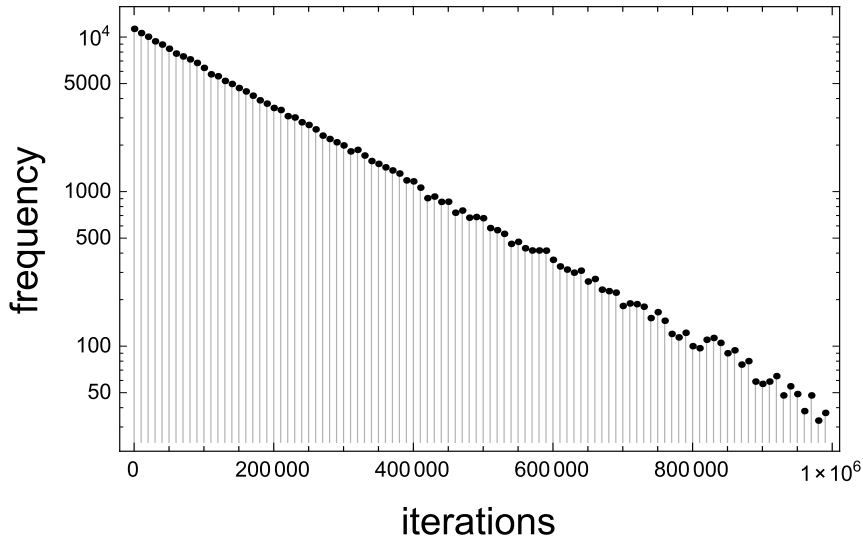}
\end{center}
\caption{Exponential distribution of run-times (iterations) of the RRR algorithm in $2\times 10^5$ solutions of the $N=43$ Hadamard instance.}
\end{figure}

A very simple strategy for expanding the reach of naive sampling uses the projections to the two constraint sets introduced in section \ref{sec:hardness}, $A(a)$ and $B$. Projection $P_{A(a)}(x)$ is the sequence having smallest 2-norm distance to $x$ and autocorrelation $a$, and $P_B(x)$ is the $\pm 1$ sequence obtained from $x$ by rounding. Suppose we start with a sequence of signs $x_0$. If $P_{A(a)}(x_0)=x_0$ we are done, because our guessed signs have the correct autocorrelation; otherwise, we construct the nearest sign sequence: $x_1=P_B(P_{A(a)}(x_0))$. We continue doing this until we find a solution or, more probably, a fixed point $x_{i+1}=P_B(P_{A(a)}(x_i))=x_i$ that is not a solution ($P_{A(a)}(x_i)\ne x_i$). From experiments we find that non-solution fixed points are encountered after just a few iterations and differ from the initial guess by a  Hamming distance of only about $N/30$, for Hadamard instances of size $N$. Although this method is an improvement over naive sampling, the fraction of signs that the projections modify is too small to make a useful algorithm.

The RRR algorithm is a far superior method of generating samples. Originally proposed (without relaxation) by Bauschke, Combettes and Luke\footnote{Although Douglas and Rachford are usually given credit for the first application, \cite{BCL} were the first to notice, more generally, the averaged-alternating-reflection structure of the iteration.} \cite{BCL} as a method for finding two points that achieve the minimum distance between two convex sets, much of the analysis that applies in that setting has little relevance for the highly non-convex constraints of bit retrieval. In particular, it is no longer useful to study convergence, or the notion that the algorithm makes systematic progress towards solutions. A potentially more productive goal, given the circumstances, is to discover general principles for iteratively constructing good samples, and in particular, understanding how this is achieved by the RRR algorithm. This is the approach that we will take.

The algorithm is usually initialized with some $x\in \mathbb{R}^N$ produced by a random number generator. This is the only explicit use of randomness and provides a means for exploring statistical properties, such as the run-time distribution in Figure 8. There are no initial $x$ that are inherently better or worse than others; in fact, the initial $x$ may even be a sequence of signs. After selecting the initial $x$, the following map is applied iteratively:
\begin{equation}\label{RRR1}
x\mapsto x'=x+\beta\left(P_{A(a)}(2 P_B(x)-x)-P_B(x)\right).
\end{equation}
Here $\beta$ is a real parameter with the restriction $0<\beta<2$ as explained below. The samples of interest to bit retrieval are $P_B(x)$, not $x$. Unlike the earlier scheme for generating samples, here we find that fixed points are always associated with solutions. Suppose $x^*$ is a fixed point; then, since
\begin{equation}
0=P_{A(a)}(2 P_B(x^*)-x^*)-P_B(x^*),
\end{equation}
we see that $P_B(x^*)$ is in the range of $P_{A(a)}$, indicating it is a sequence having the required autocorrelation.

For the fixed points of the RRR map to be the basis of an algorithm, it is necessary that these are attractive. To analyze this local property we consider a $p_B\in B$ that is either a solution, so $p_B\in A(a)\cap B$, or a near solution. A near solution $p_B\in B$ has the property that the projection $p_A=P_{A(a)}(p_B)$ is very close to $p_B$, as measured by the 2-norm. In either case, we can study the behavior of the map for $x$ that are near to both $p_A$ and $p_B$. For such $x$, $P_B(x)=p_B$, since all other elements of the hypercube $B$ are more distant, and $A(a)$ may be approximated by its tangent space, an affine space whose proximal point to $p_B$ is $p_A$. The dimension of the affine space is $M=\lfloor(N-1)/2\rfloor$, the number of circles in the Cartesian product description of $A(a)$.

For the local analysis of the near solution case we use $p_B$ as the origin and the orthogonal decomposition $\mathbb{R}^N=\mathbb{R}^1\oplus\mathbb{R}^M\oplus\mathbb{R}^{N-M-1}$, where the first component is parallel to $p_A-p_B$, the second component is the linear space (approximation of) $A(a)-p_A$, and the last component is the orthogonal complement of these. In this decomposition a general point is written as
\begin{equation}
x=x_1\oplus x_A\oplus x_\perp,
\end{equation}
and the two constraint sets have the form
\begin{eqnarray}
A(a)&=&d_A\oplus \mathbb{R}^M\oplus 0\\
p_B&=&0\oplus 0\oplus 0,
\end{eqnarray}
where $d_A=\|p_A-p_B\|$. Using the following formulas for general (local) projections,
\begin{eqnarray}
P_{A(a)}(x)&=&d_A\oplus x_A\oplus 0\\
P_{B}(x)&=&0\oplus 0\oplus 0,
\end{eqnarray}
we obtain the result of one iteration of the RRR map \eqref{RRR1}:
\begin{equation}\label{approxRRR}
x'=(x_1+\beta d_A)\oplus (1-\beta)x_A\oplus x_\perp.
\end{equation}
This formula is valid also for the true solution case, $d_A=0$, the only difference being that there now is no longer a distinction between the $\mathbb{R}^1$ and $\mathbb{R}^{N-M-1}$ components of the orthogonal decomposition.

When $d_A=0$, we see from \eqref{approxRRR} that we have the stable fixed points
\begin{equation}
x^*=x_1\oplus 0 \oplus x_\perp
\end{equation}
if and only if $|1-\beta|<1$, the condition asserted earlier. A more precise statement is that we have a $N-M$ dimensional space of fixed points. All points in this space produce the same bit retrieval sample, $p_B=P_B(x^*)$, a solution. In the case of a near solution, $d_A> 0$, although there is no longer a fixed point, the RRR map is still contracting in the $M$-dimensional tangent space approximation of $A(a)$. While it is the unidirectional motion in the first component, $x'_1=x_1+\beta d_A$ (purposeful escape from a non-solution), that is usually credited for the algorithm's success \cite{ETR}, the contracting behavior that goes with it may be just as significant.

The algorithm's name derives from the fact that the map \eqref{RRR1} can be written more compactly,
\begin{equation}\label{RRR2}
x\mapsto x'=(1-\gamma)x+\gamma\, R_{A(a)}\circ R_B(x),
\end{equation}
in terms of the reflections
\begin{eqnarray}
R_{A(a)}(x)&=&2P_{A(a)}(x)-x\\
R_{B}(x)&=&2P_{B}(x)-x,
\end{eqnarray}
and where $\gamma=\beta/2$ looks like a relaxation parameter. Figure 9 shows how the combination of two reflections followed by the $\gamma$-average has the effect of contracting along the tangent space of the $A(a)$ constraint. The case $\gamma=1/2$ is called AAR for averaged-alternating-reflections \cite{BCL}. We argue later in this section that, in combinatorially hard feasibility problems such as bit retrieval, it is important to keep $\gamma$ small.

\begin{figure}[!t]
\begin{center}
\includegraphics[width=3.in]{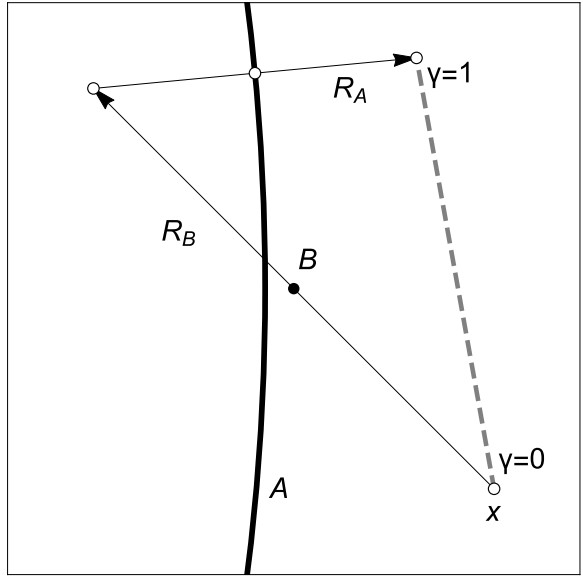}
\end{center}
\caption{Cartoon showing a single application of the RRR map to the point $x$. The result after two reflections, first $R_B$ and then $R_A$, is averaged with $x$ according to the value of the parameter $\gamma$ (a point along the dashed line). When the distance between $A$ and $B$ is smaller than the distance of either set to $x$, the chief effect of the map is to move $x$ parallel to the tangent space of $A$.}
\end{figure}

Part of the attractiveness of the RRR algorithm is the ease of its implementation. The computations involve almost exclusively floating point numbers. Most of the work in one iteration of \eqref{RRR1} is the pair of FFT's needed to perform the rescaling of the Fourier magnitudes to their known values. Even with single precision floating point numbers, so that $\eta$ in \eqref{finiteprecision} is of order $10^{-7}$, we do not compromise solution uniqueness until $N$ has grown as large as $\eta^{-2}$. For all practical purposes then, we can use a floating point RRR implementation to solve noise-free bit retrieval.

Since solutions correspond to fixed points of the RRR map, termination is linked to the value of $\|x'-x\|$. To eliminate any doubt that this floating point number has reached a small enough value, we can keep track of the smallest-achieved value over the course of a run and, whenever there is an improvement, compute the autocorrelation of $P_B(x)$ using integer arithmetic for a foolproof termination check.

One concern when using a map to generate samples, when its global behavior is complex, is that the iterates might converge on an unproductive cycle. This almost never happens for large $N$; for the $N=23$ Hadamard instance the probability of convergence to a cycle is already less than $10^{-3}$. Since both $P_{A(a)}$ and $P_B$ have strongly branching behavior when certain numbers (complex and real, respectively) are small in magnitude, it is no mystery why stable cycles are rare. Strongly mixing dynamics is also the best interpretation of the exponential run-time distribution (Fig. 8) with which we introduced this section. The latter can even be used to defend a simple safeguard against cycles: frequent random restarts. However, we did not implement this policy for the RRR results presented here.

\begin{figure}[!t]
\begin{center}
\includegraphics[width=4.5in]{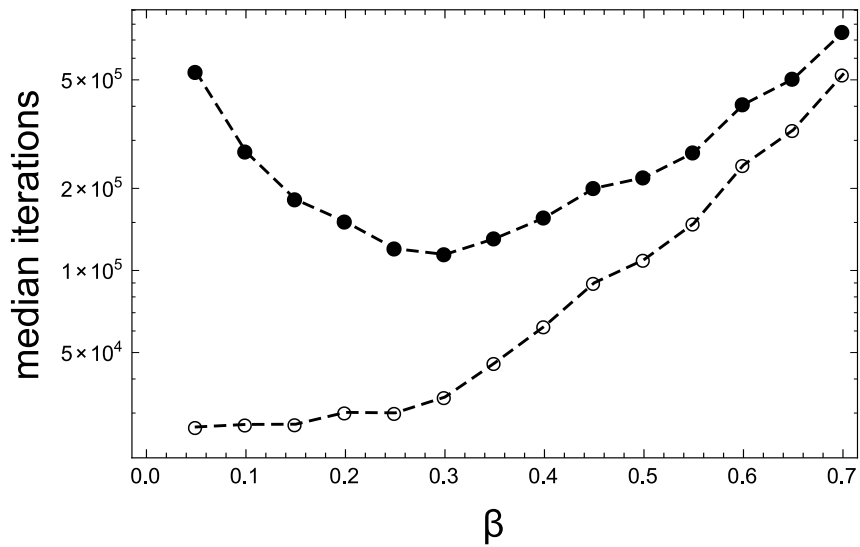}
\end{center}
\caption{\textit{Solid circles}: Median number of iterations, out of 1000 trials, required by the RRR algorithm to solve the $N=43$ Hadamard instance over a range of $\beta$ values. \textit{Open circles}: Same data but with the inverse time step, $\beta^{-1}$, divided out.}
\end{figure}

Although it is certainly possible that the optimal value of $\beta$ depends on $N$ and the hardness index, we sought a single value for all the experiments performed in this study. We determined this value by solving the $N=43$ Hadamard instance 1000 times from random starts over a range of $\beta$ values. The median iteration counts, plotted in Figure 10, have a broad minimum near $\beta=0.3$. The upturn at small $\beta$ is explained by the fact that the $\beta\to 0$ limit of \eqref{RRR1} is a system of differential equations, with $\beta$ as the time step. In this limit most of the time is spent between branch points, where the trajectory is only weakly affected by $\beta$. Because the branch points continue to scramble the trajectories for arbitrarily small $\beta$, the  character of the trajectory will not change in the $\beta\to 0$ limit. Since the probability of stumbling upon a solution (per branch point encountered along the way) is constant, the median iteration count per solution should scale in proportion to the number of steps between branch points, $\beta^{-1}$. This explanation is supported by the data in Figure 10, where we have factored out this time-step dependence and see that the result is independent of $\beta$ in the small $\beta$ limit. Because it looks like the continuous time limit of \eqref{RRR1} is well approximated already with time-step $\beta=0.3$, we have used this value in all of our experiments unless stated otherwise.

For RRR to be a proper algorithm for combinatorial search, we should at the very least have a model of the space wherein the search takes place. Developing such a model is the subject of ongoing research, and we can only offer some interesting, possibly relevant observations. We take as our primary clue the upturn of the median iteration count with increasing $\beta$, shown in Figure 10. It appears that RRR performs best in the \textit{flow limit}, $\beta\to 0$.

We have direct information about the $\beta\to 0$ search space from the statistics of the magnitudes of the individual components of $x$. Recall that the iterates $x$ of the RRR algorithm are rather indirectly linked to the constraint sets. A small magnitude of component $x_k$, for instance, corresponds to a strong uncertainty in the  sign $s_k$ of the projection $P_B(x)$. Figure 11 shows the distribution $|x|$ over all components taken from a single run of the algorithm on an instance with the Hadamard autocorrelation $a_k=-1$, $k\ne 0$, but with $N=41$ so there is no solution. We believe RRR is ergodic on these kinds of insoluble instances, and use a single long run to sample statistical data. To be sure to see flow limit behavior, we have set $\beta=0.01$. There is a clear anomaly in the distribution at $|x|=0$ whose width (detail in right panel of Fig. 11) scales with $\beta$. On average about six of the components of $x$ are exceptionally small in magnitude, the remainder being broadly distributed. Since zeroes in the components of $x$ correspond to the Voronoi cell faces of the hypercube $B$, we conclude that the search (in the flow limit) is confined to Voronoi cell facets whose codimension is about six in this problem instance.

\begin{figure}[!t]
\begin{center}
\includegraphics[width=5.in]{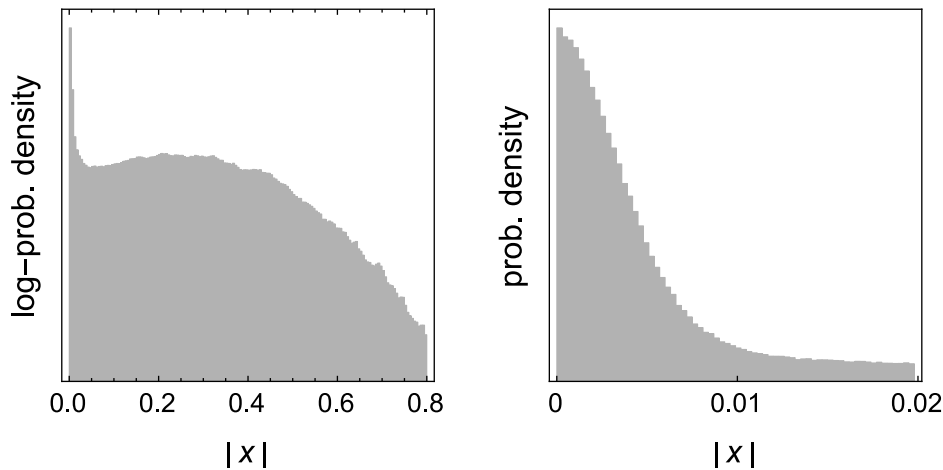}
\end{center}
\caption{Distribution of the magnitudes of the components of $x$ (detail on right) in the flow limit ($\beta=0.01$). Data is taken from an insoluble $n=41$ Hadamard instance. As the weight in the anomaly at $|x|=0$ is about 15\%, or 6 components of $x$, the RRR search is confined to codimension-6 Voronoi facets of the discrete set $B$.}
\end{figure}

The \textit{facet attraction property} exhibited by the data in Figure 11 raises two questions about the RRR algorithm: (1) What is the mechanism for this attraction, and (2) is this property responsible for the algorithm's good performance? The answer to the first question can be explained with the flow field in Figure 12. This example was constructed to have the codimension-2 attractor (Voronoi-facet) $x_1=x_2=0$. Near the origin of the $(x_1,x_2)$ plane, the point $p_B=P_B(x)$ jumps discontinuously between the four possibilities $\{-1,1\}^2$ and this in turn determines four discontinuous points $p_A=P_A(2p_B-x)$. The RRR flow field is parallel to $p_A-p_B$, and near the origin has the discontinuous structure shown. For finite $\beta$ the RRR iterates make finite jumps but stay within a distance of scale $\beta$ from the origin. We do not know what determines the codimension of the facets that RRR is attracted to. The codimension for Hadamard instances appears to be about $0.15 N$.

For the second question above we need to establish that the facet attraction property has the effect of narrowing the search to a smaller domain that has an increased rate of finding solutions. A possibly relevant statistic in this regard is a clustering property of the projections $p_A$. Suppose we have an instance of bit retrieval with solution $p_B$. Not only do we know that $p_B\in A$, but perturbations of $p_B$ will at least be close to points $p_A\in A$. If we now consider a set of perturbations of $p_B$, all with a nearby point $p_A$, then by the triangle inequality we will have a set of points $p_A\in A$ all within some bounded distance of each other. The existence of such a cluster in the set $A$ comes from the fact that the original point $p_B$ was a solution.

\begin{figure}[!t]
\begin{center}
\includegraphics[width=3.in]{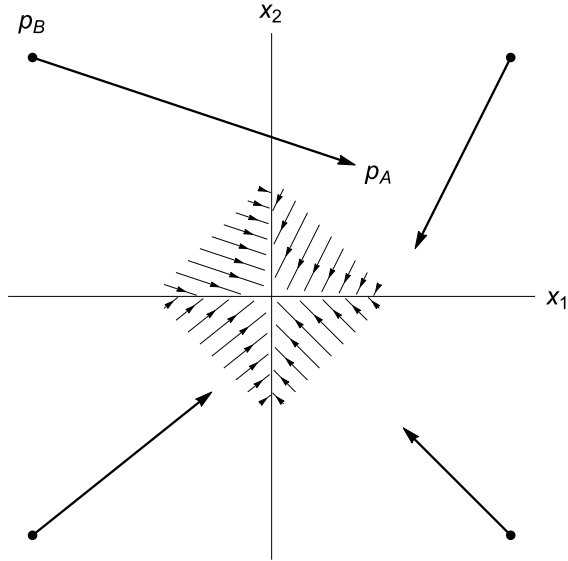}
\end{center}
\caption{Facet attraction is a simple consequence of particular flow fields ($\beta\to 0$ limit of RRR), shown here for a codimension-2 facet. In each orthant of the space orthogonal to the facet (just four in this example), the flow field (shown projected into the plane) near the origin is parallel to $p_A-p_B$, where $p_B=P_B(x)$ is the vector of signs that goes with each orthant and $p_A=P_A(2p_B-x)$. The origin is a fixed point for the flow shown.}
\end{figure}

The RRR facet attraction property is our motivation for the following perturbation of a solution $p_B$. Let $J$ be the index set of the components of $x$ that vanish on a facet whose codimension is $|J|$. Define $x_J$ as the point obtained from $p_B$ by setting to zero all components with indices in $J$. The projection $P_B(x_J)$ is undefined on the zero components but arbitrarily small perturbations $\tilde{x}_J$, when projected $P_B(\tilde{x}_J)$, produce a hypercube of $2^{|J|}$ points in the space orthogonal to the facet. The argument of the projection $P_A$ in RRR is $2 P_B(\tilde{x}_J)-\tilde{x}_J$, a vector whose components match the solution $p_B$ on the indices not in $J$ and includes all combinations of $\pm 2$ in the rest. This is our special set of perturbations of $p_B$ and as argued earlier, when projected by $P_A$ the resulting set of points will cluster when $p_B$ is a solution. What remains is to argue the converse: that the facet attraction property promotes clustering and thereby increases the odds that a $p_B$ generated by RRR is a solution.

The relationship between facet attraction and clustering of the projections $p_A$ (of the perturbations of $p_B$) is also explained in Figure 12. Note that there are constraints on the vectors $p_A-p_B$ such that the corresponding flow field is attracting for the origin. These vectors must all lie within solid angles subtended at vertices of the hypercube $\{-1,1\}^{|J|}$, constraints that are particularly strong in high dimensions (large $|J|$). Together with mild assumptions on the magnitudes $\|p_A-p_B\|$, the effect of these constraints is to bring the $p_A$ into proximity of each other (clustering).

The $N=41$ Hadamard instance we used in our demonstration of facet attraction is also well suited to demonstrate the degree of clustering of the points $p_A=P_A(x')$ obtained by projecting perturbations $x'$ of a particular $p_B$. As above, we consider the set of $2^{|J|}$ perturbations $x'$ specified by index set $J$, where $x'$ matches $p_B$ for indices not in $J$ and has values $\pm 2$ on the others. We use the root-mean-square measure of clustering
\begin{equation}
\sigma_A^2=\langle\, \|p_A-\langle{p_A}\rangle \|^2\,\rangle,
\end{equation}
where the angle brackets are averages over the $2^{|J|}$ projections $p_A$. Figure 13 shows distributions of $\sigma_A$ for three choices of the base point $p_B$ being perturbed. The distribution with the smallest mean, not surprisingly, is generated by solution points $p_B$. To produce this distribution we used the two-valued symmetric-Hadamard autocorrelations $a_k\in\{-3,1\}$, $k\ne 0$ rather than $a_k=-1$ (for which there are no solutions). The $\sigma_A$ values have a distribution because we uniformly sample the index sets $J$, for $|J|=6$.

For the other two distributions in Figure 13 we used the insoluble autocorrelation data $a_k=-1$, $k\ne 0$. The distribution with the largest mean was generated by uniformly sampling $p_B\in B$. That the mean for this distribution is higher than the distribution for a solution point is of course not surprising; the separation of the distributions just establishes the scale of the clustering effect. The most interesting distribution is the middle one, for $p_B$ samples generated by iterating RRR (with $\beta=0.01$). We see that RRR has the desired effect of generating samples $p_B$ with better clustering, or solution likelihood, than random samples. Moreover, the mechanism for the improved clustering is linked to the facet attraction property (Fig. 12).

\begin{figure}[!t]
\begin{center}
\includegraphics[width=4.in]{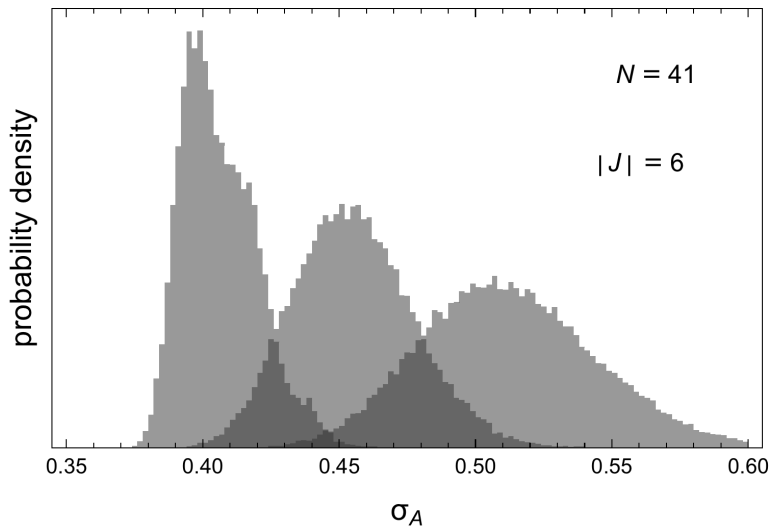}
\end{center}
\caption{Cluster size ($\sigma_A$) distributions on constraint set $A$ generated by perturbations of three types of points $p_B\in B$: solutions (lowest mean), random samples (highest mean), samples generated by RRR (intermediate mean). Data are for the $N=41$ Hadamard instance and codimension-6 facets.}
\end{figure}

The complexity of the RRR algorithm for bit retrieval can be assessed by the exponential growth in the number of iterations, since the work in each iteration has only $O(N\log{N})$ growth from the pair of FFTs in the $P_{A(a)}$ projection. Our experiments support the growth law, $e^{c N}$, where $c$ is hardness dependent. 
We report values of the median of the iteration counts obtained in 20 trials, since we have seen no exceptions to the exponential form of the iteration distribution on individual instances (Fig. 8).
Figure 14 shows the exponential growth in the median for average-case instances ($h\approx 0.56$). Results are shown for ten instances at each $N$, as well as the geometric means of the ten instances. While there is considerable scatter with respect to instance, the averages are consistent with an exponential growth law and $c\approx 0.212$. 

\begin{figure}[!t]
\begin{center}
\includegraphics[width=4.5in]{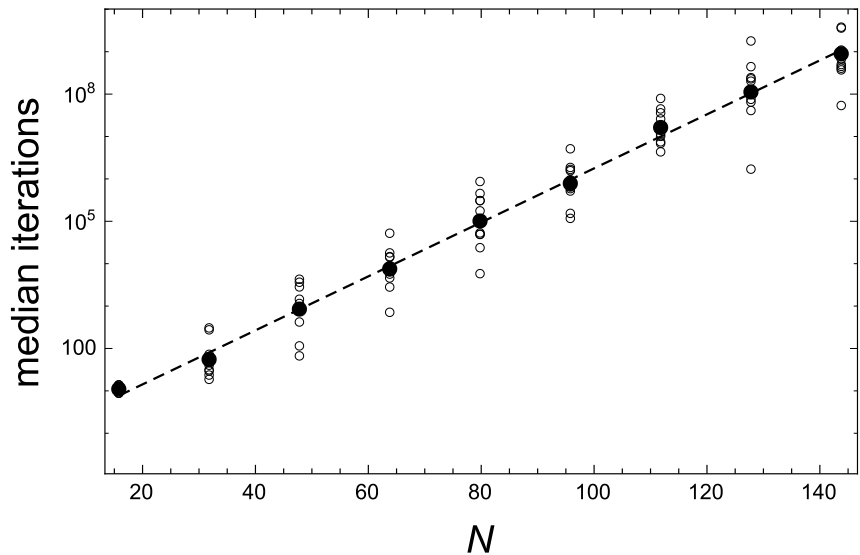}
\end{center}
\caption{Median iteration count of the RRR algorithm when solving average-case instances of bit retrieval. Results are shown for ten instances at each $N$ (open circles) as well as their geometric means (filled circles).}
\end{figure}

Figure 15 shows the much faster exponential growth for instances at the upper limit of the hardness index, $h\approx 1$. This study also compares the performance of the RRR algorithm with and without noise. In the instances with $N\equiv 3\pmod{4}$, where there exist perfect Hadamard sequences, we gave the noise-free autocorrelation data $a_k=-1, k\ne 0$ as input. In the other set of instances, $N\equiv 1\pmod{4}$, we specified the noisy autocorrelation $n_k=-1\pm 2, k\ne 0$ for solutions to be compatible with the symmetric Hadamard sequences. While in the second case it would have been easy to specialize the constraint projections for symmetric sequences, we chose not to in order to have noise be the only contrasting feature. The results suggest that noise has negligible effect on the complexity constant (slope): $c\approx 0.513$, $N\equiv 3$ and $c\approx 0.494$, $N\equiv 1$. The scatter of the data points about the straight lines in the plot is larger than our errors in estimating the median and therefore is intrinsic to each $N$. It is interesting that $N\equiv 1$ instances are easier by about a factor $180$ relative to the $N\equiv 3$ group.

\begin{figure}[!t]
\begin{center}
\includegraphics[width=4.5in]{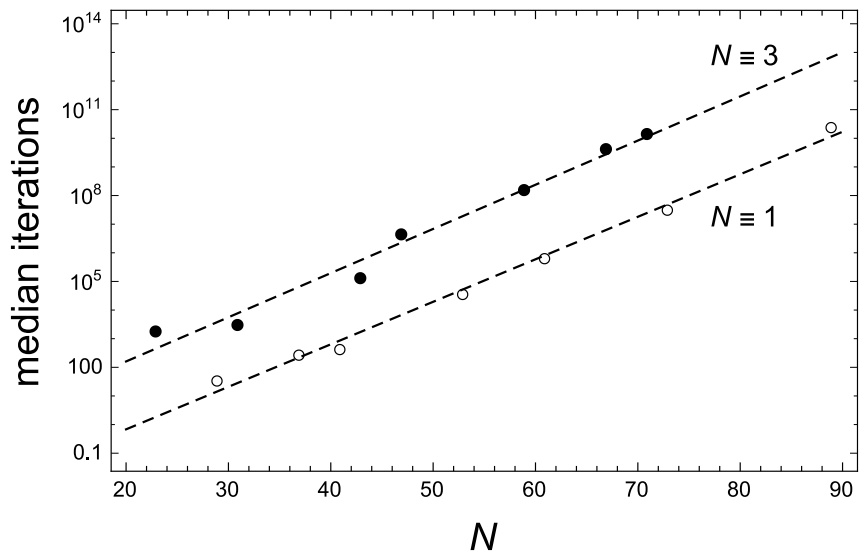}
\end{center}
\caption{Median iteration count of the RRR algorithm when solving Hadamard instances. Errors in the estimates of the median are smaller than the scatter of the points about the straight lines.}
\end{figure}

\section{Summary}\label{sec:summary}

We defined three versions of the bit retrieval problem. The noise-free problem arises in cryptographic attacks of digital signature schemes based on polynomial rings. While this is the easiest version, the best known algorithm still has subexponential complexity. With noise, even at a level that has no effect on solution uniqueness, the best algorithms have exponential complexity.
This is the version of bit retrieval that in x-ray crystallography is known as the phase problem.

Bit retrieval instances of the same size $N$ can have very different difficulty depending on the hardness index $h$. In the Howgrave-Graham-Szydlo algebraic algorithm the hardness index appears as the norm of the cyclotomic integer that must be factored (as an ordinary integer), while in constraint satisfaction algorithms it corresponds to the measure of a constraint set. Low-autocorrelation instances of bit retrieval, the most extreme form of which are solved by Hadamard sequences, have the highest hardness index.

The phase problem of x-ray crystallography normally corresponds to the sparse limit of bit retrieval. However, even sparse problems have a difficult regime, as quantified by the mean multiplicity $\mu$ of differences (atom-atom separations), in the cyclic difference set description of bit retrieval. Iterative constraint satisfaction algorithms are indispensable in this application of bit retrieval, having solved an estimated one million crystal structures \cite{S}. Even so, these algorithms have not received the theoretical scrutiny that normally goes with algorithms that provide a basic service to scientific investigations.

We have made a special effort to obtain the best estimates of the constant $c$ in the exponential complexity $2^{c N}$ achieved by algorithms that solve the noisy variant of bit retrieval. There is a simple algorithm that achieves $c=3/4$. For the branch and bound algorithm, based on convex relaxation of the non-convex constraints of bit retrieval, we obtain the numerical estimate $c\approx 0.564$ for the hardest ($h\approx 1$) instances. The best performance, $c\approx 0.504$, is achieved by the least understood algorithm, relaxed-reflect-reflect or RRR. While there is a vast literature on the application of algorithms like RRR to problems with convex constraints,
their continued success on highly non-convex problems is a curiosity that has not received the explanation it deserves.

In bit retrieval there is a simple argument why branch and bound, the RRR algorithm's closest competitor, is suboptimal. As in integer programming, branching is performed on a subset of the signs, with bounds arising from convex inequalities imposed on their complement. But unlike integer programming, in bit retrieval we are also forced to relax a second set of constraints, on the Fourier magnitudes, in order that the feasibility subproblems are convex and easily solved. The second relaxation does not change the feasible set of the complete problem (see section \ref{sec:convex}), but this manifests itself only once all the signs have been set to definite values. The branch and bound algorithm may thus follow branches that would have been discarded by an algorithm that did not need to attenuate the Fourier magnitude constraints.

The superior performance of RRR on noisy bit retrieval, relative to well established algorithms, is motivation for a better theoretical understanding of this algorithm. It is likely that the effectiveness of RRR, on hard non-convex problems, is only weakly linked to its properties in the convex domain. 
A possible alternative approach to its analysis was brought to light by our investigations of the flow limit ($\beta\to 0$). After correcting for the trivial `time-step' scaling, this limit optimizes performance. RRR search in the flow limit is confined to facets of the Voronoi cells of the discrete point set $B$, the hypercube of signs. A bias in the RRR sampling of $B$ that favors solutions is consistent with this facet attraction property.

\section{Acknowledgements}\label{sec:ack}

The author thanks the many individuals who have helped shape his understanding of bit retrieval over the years: J. Borwein, J. Buhler, C. Clement, P. Diaconis, N. Howgrave-Graham, G. Kuperberg, C. Moore, I. Rankenburg, J. Rosenberg, M. Szydlo. Support was provided by the Simons Foundation and DOE grant DE-FG02-11ER16210. Most of the work was carried out while the author was a visiting scientist at SLAC.

\section{Appendix}

\subsection{Hadamard sequences}\label{sec:hadamard}

For odd $N$ we define Hadamard sequences to be those $\pm 1$ sequences whose sum is 1 and whose $k\ne 0$ autocorrelations are as small as they can be. To expand on the last property, note that the identity
\begin{equation}
s_1 s_2\equiv s_1+s_2-1\pmod{4}
\end{equation}
for signs $s_1$ and $s_2$ implies that
\begin{equation}
a_k \equiv\sum_{l=0}^{N-1} s_l+s_{l-k}-1\equiv 2-N\equiv N\pmod{4}.
\end{equation}
Also, because $a_0=N$ and the sum of the autocorrelations is the square of the sum of the sign sequence, or 1, the average of the $k\ne 0$ autocorrelations must be $-1$. When $N\equiv 3\pmod {4}$ we can insist that $a_k=-1$, $k\ne 0$, while for $N\equiv 1\pmod {4}$ the best we can have is $a_k\in\{-3,1\}$, $k\ne 0$. The first class of Hadamard sequences is called perfect.

For either oddness of $N$ there is a simple construction of Hadamard sequences when $N$ is prime:
\begin{equation}
s_k=\left\{
\begin{array}{rl}
1,&k=0\\
\left(\frac{k}{N}\right),& k\ne 0.
\end{array}
\right.
\end{equation}
Because $s_k$ is explicitly given by the Legendre symbol, these are called Legendre sequences. Legendre sequences have the remarkable property \cite{H} that the $k\ne 0$ components of their Fourier transforms are simply obtained by applying a shift and, in one case, a complex rotation to the sequence itself:
\begin{equation}
\hat{s}_k=\left\{
\begin{array}{rl}
s_k+1/\sqrt{N},&N\equiv 1\pmod{4}\\
\mathrm{i}s_k+1/\sqrt{N},&N\equiv 3\pmod{4}.
\end{array}
\right.
\end{equation}
For $N\equiv 3$ the $k\ne 0$ Fourier magnitudes are perfectly equal while in the other case the magnitudes become uniform with increasing $N$. The $N\equiv 1$ Legendre sequences are symmetric because their Fourier transform is real.

In addition to the perfect Hadamard sequences given for prime $N$ by the Legendre symbol, there are also constructions whenever $N$ is the product of twin primes or one less than a power of two \cite{B}.

\subsection{Hardness distribution}\label{sec:harddist}

Freedman and Lane \cite{FL} proved that the distribution of the Fourier transform coefficients $\hat{s}_q$  of a sequence $s_0,\ldots,s_{N-1}$ of independently and identically distributed real random variables converges, for large $N$, to a distribution of independent and identical complex-normal distributions for $q=1,\ldots,M=\lfloor(N-1)/2\rfloor$. To characterize the distribution we therefore only need to compute the mean and variance of one of these Fourier coefficients. In the sequence distribution where $-1$ has density $\delta$,
\begin{eqnarray}
\langle\hat{s}_q\rangle_\delta&=&0,\\
\langle |\hat{s}_q|^2\rangle_\delta&=&\langle s_0^2\rangle_\delta+\sum_{k\ne 0}e^{i 2\pi k q/N}\langle s_0 s_{k}\rangle_\delta\\
&=&1-\langle s_0 s_{1}\rangle_\delta=4\delta(1-\delta).
\end{eqnarray} 
Moreover, since 
\begin{equation}
\log{h(s)}=\frac{1}{M}\sum_{q=1}^M\log{|\hat{s}_q|^2}
\end{equation}
is the average of $M$ independent and identically distributed real random variables, it has the central limit property in the limit of large $M$ when the mean and variance exist. To check the latter, we note that if $t=|\hat{s}_q|^2$ is the magnitude of a complex normal random variable of zero mean and variance $\sigma^2=4\delta(1-\delta)$, then $t$ has probability density
\begin{equation}
\rho(t)=\frac{e^{-t/\sigma^2}}{\sigma^2}.
\end{equation}
For large $N$ we therefore know that the log-hardness is concentrated at its mean value:
\begin{equation}
\langle \log{h(s)}\rangle_\delta=\int_0^\infty \rho(t) \log{t}\, dt= \log{\sigma^2}-\gamma.
\end{equation}

\subsection{Backtracking tree-width}\label{sec:maxwidth}

Applying Stirling's formula and the asymptotic conditions
\begin{equation}\label{asym}
1\ll k\ll N,\qquad 1\ll\mu
\end{equation}
to \eqref{n} we obtain
\begin{equation}
\log{n(k)}\sim -\frac{k^2}{2}e^{-\mu}+ k\log{N}-k\log{k}+k.
\end{equation}
We identify the maximum $k^*$ by the vanishing of the derivative with respect to $k$:
\begin{equation}
k^* e^{-\mu}\sim \log{N}-\log{k^*}\sim \log{N}.
\end{equation}
The maximum will be consistent with \eqref{asym},
\begin{equation}
k^*\sim e^{\mu}\log{N}\ll N
\end{equation}
provided we keep $\mu$ fixed (but large) as we take the limit $N\to \infty$.

\end{document}